\documentclass[sigplan,screen]{acmart}
\usepackage{multirow}
\usepackage{subfig}
\usepackage{soul}
\usepackage{listings}
\lstset{
    backgroundcolor=\color{white}, 
    basicstyle=\footnotesize\ttfamily, 
    columns=fullflexible,
    tabsize=4,
    breaklines=true, 
    captionpos=b, 
    commentstyle=\color{mygreen}, 
    escapeinside={\%*}{*)}, 
    keywordstyle=\color{blue}, 
    stringstyle=\color{mymauve}\ttfamily, 
    frame=single,
    rulesepcolor=\color{red!20!green!20!blue!20},  
    language=c++,
    literate={_i}{{$_i$}}1,
 keywords={MCmethod,samplingSpaceConstruction,hierarchicalSampling,sampling}
}
\settopmatter{printacmref=false}
\AtBeginDocument{%
  }





\newcommand{\stitle}[1]{\vspace{1ex} \noindent{{\bf #1}}}

\newcommand{\name}{\textsc{Bingo}}

\begin{document}

\title{{\name}: Radix-based Bias Factorization for Random Walk on Dynamic Graphs}

\author{Pinhuan Wang}
\affiliation{%
  \institution{Rutgers, The State University of New Jersey}
  \city{Piscataway}
  \state{NJ}
  \country{USA}
}
\email{pw346@connect.rutgers.edu}

\author{Chengying Huan}
\affiliation{%
  \institution{State Key Laboratory for Novel Software Technology, Nanjing University}
  \city{Nanjing}
  \country{China}
  }
\email{huanchengying@nju.edu.cn}

\author{Zhibin Wang}
\affiliation{%
  \institution{State Key Laboratory for Novel Software Technology, Nanjing University}
  \city{Nanjing}
  \country{China}
}
\email{wzbwangzhibin@gmail.com}

\author{Chen Tian}
\affiliation{%
 \institution{State Key Laboratory for Novel Software Technology, Nanjing University}
 \city{Nanjing}
 \country{China}}
   \email{tianchen@nju.edu.cn}

\author{Yuede Ji}
\affiliation{%
  \institution{The University of Texas at Arlington}
  \city{Arlington}
  \state{Texas}
  \country{USA}
  }
  \email{yuede.ji@uta.edu}

\author{Hang Liu}
\affiliation{%
  \institution{Rutgers, The State University of New Jersey}
  \city{Piscataway}
  \state{NJ}
  \country{USA}
 }
\email{hang.liu@rutgers.edu}

\renewcommand{\shortauthors}{Wang et al.}


\begin{abstract} 
Random walks are a primary means for extracting information from large-scale graphs. While most real-world graphs are inherently dynamic, state-of-the-art random walk engines failed to \textit{efficiently} support such a critical use case. This paper takes the initiative to build a general random walk engine for dynamically changing graphs with two key principles: (i) This system should support both low-latency streaming updates and high-throughput batched updates. (ii) This system should achieve fast sampling speed while maintaining acceptable space consumption to support dynamic graph updates. Upholding both standards, we introduce {\name}, a GPU-based random walk engine for dynamically changing graphs. First, we propose a novel radix-based bias factorization algorithm to support constant time sampling complexity while supporting fast streaming updates. Second, we present a group-adaption design to reduce space consumption dramatically. Third, we incorporate GPU-aware designs to support high-throughput batched graph updates on massively parallel platforms. Together, {\name} outperforms existing efforts across various applications, settings, and datasets, achieving up to {a 271.11x} speedup compared to the state-of-the-art efforts.
\end{abstract}

\begin{CCSXML}
<ccs2012>
   <concept>
       <concept_id>10010147.10010169.10010170</concept_id>
       <concept_desc>Computing methodologies~Parallel algorithms</concept_desc>
       <concept_significance>100</concept_significance>
       </concept>
   <concept>
       <concept_id>10003752.10003809.10003635.10010038</concept_id>
       <concept_desc>Theory of computation~Dynamic graph algorithms</concept_desc>
       <concept_significance>500</concept_significance>
       </concept>
 </ccs2012>
\end{CCSXML}

\ccsdesc[100]{Computing methodologies~Parallel algorithms}
\ccsdesc[500]{Theory of computation~Dynamic graph algorithms}


\keywords{Random Walk, Monte Carlo Sampling, GPUs.}

\maketitle

\section{Introduction}

Random walks have drawn increasing attention in recent years since the representation learning of graphs (a.k.a, graph learning) entered the center stage of machine learning~\cite{hamilton2017inductive,nikolentzos2020random,wu2020comprehensive}. Below are two well-known scenarios: (i) In graph learning, a typical option is to use \textit{random walks} to select a few subsets of vertices and edges from the original graph (each of the subsets is treated as a mini-batch) to train the graph neural networks~\cite{tan2023quiver,serafini2021scalable,bajaj2024graph,polisetty2023gsplit,chen2023tango}. This approach increases the scalability, generality, and efficiency of graph learning. However, random walks, unfortunately, take 96.2\% of the end-to-end training time for graph learning, according to Gong et al's paper~\cite{gong2023gsampler}. (ii) In friend recommendation of social media, one uses \textit{random walks} to generate the node embeddings for the final recommendation. The random walk takes 3.5 hours (or 35\% of the total time) on a graph snapshot of 227 million users and 2.71 billion edges~\cite{mei2024flowwalker}. In addition, in personalized PageRank~\cite{haveliwala2003analytical}, SimRank~\cite{jeh2002simrank}, and Random Walk Domination~\cite{li2014random}, we need to launch many \textit{random walks} and use the visit frequency of each vertex across all these random walks as the major indicator to derive PageRank value, vertex similarity, and influence, respectively.


Perhaps the most practical use case of random walks would be extending their capabilities to real-world dynamic graphs. Using fraudulent detection of e-commerce platforms as an example, the transaction graph is changing constantly. The malicious users could commit a series of illicit activities if the graph updates are not immediately integrated into the graph learning process~\cite{qiu2018real}. Therefore, it is imperative to support random walks on dynamic graphs, which is already evident in Ant Finance~\cite{luo2023faf}. Similar needs are also observed in weather forecast~\cite{ma2023histgnn}, Retrieval-Augmented Generation (RAG) of Large Language Models (LLMs)~\cite{lewis2020retrieval}, friend/product recommendation~\cite{yang2021consisrec}, and human resource management~\cite{hang2022outside} among many others~\cite{xia2019random,wang2011understanding}.

Unfortunately, there does not exist, to the best of our knowledge, a {dedicated} system that can \textit{efficiently} support random walks on \textit{dynamically changing} graphs. Below, we discuss related projects in four aspects: 
\ul{(i)} Most existing CPU or GPU-based random walk and graph sampling systems only support random walk on static graphs (KnightKing~\cite{yang2019knightking}, C-SAW~\cite{pandey2020c}, Graphwalker~\cite{wang2020graphwalker}, and others~\cite{jangda2021accelerating,sun2021thunderrw,gong2023gsampler}). Of note, TEA~\cite{huan2023tea} focuses on a temporal graph, a special \textit{static} graph with time-sensitive attributes. 
\ul{(ii)} Existing dynamic graph analytical engines only support traditional graph algorithms, excluding random walks (see most recent LSGraph~\cite{qi2024lsgraph} and the prior ones~\cite{jaiyeoba2019graphtinker,winter2018faimgraph,awad2020dynamic,sha2017technical}).
\ul{(iii)} We do notice two recent projects that support updating random walk based on the graph updates~\cite{papadias2022space,hou2023personalized}. These two projects are orthogonal to {\name}. Particularly, they focus on managing a laundry list of random walks and expediting the process of finding the correct random walk to update. When updating the random walk, they simply rebuild the sampling space for updating, which is inefficient. \ul{(iv)} We also want to clarify that - some recent efforts~\cite{mei2024flowwalker,sun2021thunderrw} called higher-order Random Walk applications, such as Node2vec, as Dynamic Graph Random Walks.  However, these random walk applications are performed on \textit{static graphs} with biases that can change based on the random walk history. For more details on these related works, we refer the readers to Section~\ref{RW}.

This paper aims to design a system that could efficiently support various random walk applications on \textit{dynamically changing graphs}. We identify two important design principles for building such a system:

First, this system should support both low-latency streaming updates and high-throughput batched updates. On the one hand, real-world graphs could experience important updates that should be incorporated immediately. Fraud detection, weather forecast, and RAG of LLM belong to this category. On the other hand, ingesting the system with a batch of graph updates is also commonplace. Certain graph systems, such as product recommendations, could require updating the graph daily with a large volume of updates. As we will demonstrate, batched updates offer more optimization opportunities for a higher ingestion rate.

Second, this system should achieve fast sampling speed while introducing acceptable memory consumption to support dynamic updates. It is anticipated that supporting dynamic graph updates could lead to more complex data structures that will consume more time for sampling and more memory for maintaining the transition probabilities. However, if the penalty on sampling speed and memory consumption is too high, one might simply adopt the sampling on static graphs approach by rebuilding the sampling space from scratch to cope with the dynamic updates. This principle will uphold the quality of this system.

We design and implement {\name}, a GPU-based random walk engine for dynamically changing graphs that adheres to the aforementioned two principles. {Together, {\name} is up to 271.11$\times$ faster than the state-of-the-art that handles graph updates. For graph updates, the ingestion rate of {\name} can reach up to {226} million updates per second.} {\name} features the following four major contributions:

\begin{itemize}
    \item We introduce a novel radix-based bias factorization approach that decomposes the bias of each edge by their radix such that the updates are performed on each bias radix group, which is unbiased. This leads to constant time updating complexity. We also propose a hierarchical sampling algorithm for constant time sampling complexity (Section~\ref{SM}).
    \vspace{.05in}
    \item Considering the memory consumption of our new data structure could be high, we present a group-adaption design to reduce the memory consumption significantly. Since our adaptation is designed according to the nature of various radix groups, this design preserves fast sampling time complexity (Section~\ref{subsec:space}).
    \vspace{.05in}
    \item When handling batched updates, {\name} processes all the requests in a massively parallel manner. Particularly, (i) we introduce a workflow to handle insertions and deletions and potential group rebuilding efficiently. (ii) Our novel delete-and-swap design enables massive parallel deletions (Section~\ref{subsec:bached_updates}).
    \vspace{.05in}
    \item We perform comprehensive evaluations of {\name} on various datasets, configurations, and random walk applications. Overall, {\name} constantly outperforms the state-of-the-art across all graph update settings with acceptable memory consumption. Further, our ingestion rate can reach 0.2 million (streaming updates) and 226 million (batched updates) updates per second across workloads of insertions, deletions, and mixed updates, respectively (Section~\ref{E}).
\end{itemize}

\section{Background}\label{Bg}

This section introduces the definition of dynamic graphs, along with popular random walk applications and the required background about sampling algorithms.

\begin{figure}
    \centerline{\includegraphics[width=1\linewidth]{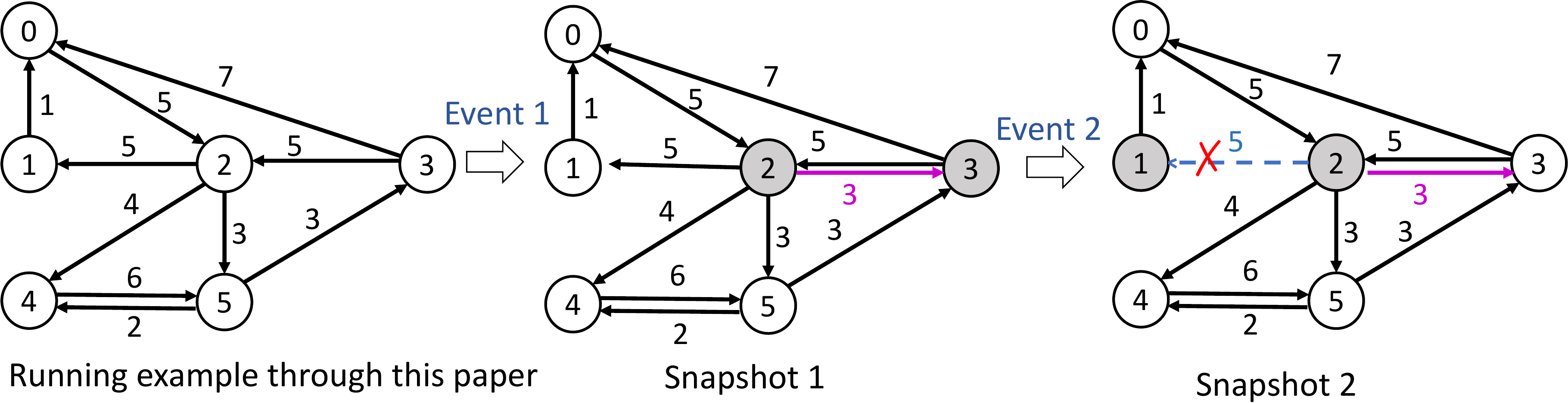}}
    \caption{Running example. Event $1$ contains one edge insertion and event $2$ one edge deletion. 
    \vspace{-.15in}
    }
    \label{dynamic}
\end{figure}

\subsection{Dynamic Graph}

As shown in Figure~\ref{dynamic}, different from static graphs, the states or number of vertices and edges may change over time in dynamic graphs, which is caused by a series of events like vertex update/insertion/deletion or edge update/insertion/deletion. Formally, we define dynamic graphs using the graph snapshot model as follows~\cite{kazemi2020representation}:


\begin{definition}(Dynamic graph).
    A dynamic graph is a sequence of discrete graph snapshots, $\mathcal{G}^t$= \{$\mathcal{V}^t$, $\mathcal{E}^t$\}, where $t \in \mathbb{N}$ represents a timestamp, $\mathcal{V}^t =\{v_1^t, . . .,v_n^t\}$ represents the vertices, and $\mathcal{E}^t = \{e_1^t, . . .,e_m^t\}$ represents the edges.
\end{definition}

Dynamic graphs otherwise behave similarly as static graphs, regardless of whether they have directed or undirected edges, or whether the edges are weighted.

\subsection{Random Walk Applications}

Random walks randomly extract paths from the original large graph. Random walks can be unbiased or biased. Here, we introduce two well-known random walk applications: DeepWalk and node2vec.

\stitle{DeepWalk.} DeepWalk~\cite{perozzi2014deepwalk} is a simple but popular random walk algorithm. It generates random walk paths through repeated sampling. In this process, the walkers stop when they reach the given path length. The paths are treated as sentences and used in the SkipGram model~\cite{mikolov2013efficient} to learn the latent representation. The original DeepWalk was unbiased and extended to a biased version later by Cochez et al~\cite{cochez2017biased}.

\stitle{Node2vec.}\label{sn2v} Different from DeepWalk, node2vec~\cite{grover2016node2vec,zhou2018efficient} is a more expressive higher-order algorithm, in which the transition probability also depends on the walk history. Supposing there is an undirected graph walker at vertex $u$, and the just visited vertex is $w$, the transition probability of an edge to the neighbor $v$ will be multiplied with a factor $f(w,v)$ that depends upon the following conditions:
\begin{equation}
    f(w,v)=
    \begin{cases}
        \frac{1}{p}, & \text{if } distance_{w,v} = 0, \\
        1,           & \text{if } distance_{w,v} = 1, \\
        \frac{1}{q}, & \text{if } distance_{w,v} = 2.
    \end{cases}\label{node2vec}
\end{equation}
Here, $p$ and $q$ are user-defined hyper-parameters that influence the behavior of the walker. Specifically, smaller $p$ increases the tendency for the walker to backtrack, while larger $p$ decreases such a tendency. A lower $q$ encourages the walker to explore far away, like Depth-First Search, while a larger $q$ traps the walker around the group of vertices with strong connections.
Finally, we normalize the transition probability to ensure the sum is 1. 

\subsection{Monte Carlo Sampling Algorithms}\label{sampling methods}

Despite the large volume of random walk applications~\cite{xia2019random}, they all share a common operation - a walker arrives at a vertex and selects among its neighbors for further exploration based on the transition probabilities. This process is called sampling. Sampling is called unbiased when all the candidates share identical probability; otherwise, it is biased sampling. Unbiased sampling is trivial, as we can easily pick the edge through random number generation. For biased sampling, the situation becomes more complex. Each candidate has a bias, which could be the weight or other information associated with the edge or vertex, depending on the specific application. 
Let $w_i$ denote the bias of the $i$-th candidate. Without loss of generality, we consider biased sampling at vertex $u$, which has $d$ neighbors denoted by $\{v_0, \dots, v_{d-1}\}$. The transition probability of selecting neighbor $v_i$ is:
\begin{equation}
    P(v_i) = \frac{w_i}{\sum_{j=0}^{d-1} w_j}.
    \label{transition probability}
\end{equation}

Figure~\ref{sampling} shows the three most common Monte Carlo sampling methods on a static graph. We describe them as follows:

\begin{figure}
    \centerline{\includegraphics[width=1\linewidth]{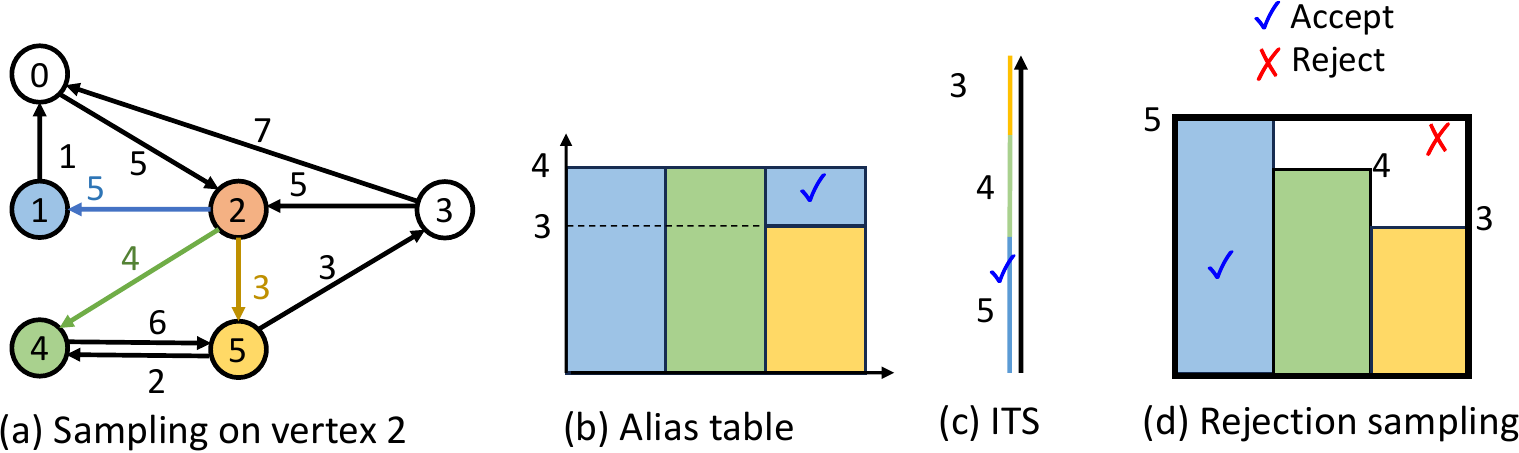}}
    \caption{Three classical Monte Carlo sampling methods for sampling on vertex $2$. 
        }
    \label{sampling}
\end{figure}

\stitle{Alias method} splits all the $d$ ($d$ is the vertex degree) candidates into no more than $2d$ pieces and places them into $d$ buckets, where two specifications are met: (i) Every bucket only contains up to $2$ candidates. (ii) The volume of each bucket should be the same and equal to the average bias of the candidate set. The above structure is the so-called alias table. When sampling, we first select one bucket with equal probability, and then sample among up to $2$ elements in that bucket. The total sampling time complexity is $O(1)$ and the time for alias table construction is $O(d)$. Figure~\ref{sampling}(b) illustrates the alias table-based approach for vertex $2$.

\stitle{Inverse Transform Sampling (ITS)} mainly samples based on an array C, which stores the information of Cumulative Distribution Function (CDF) for candidate transition probabilities. We arrange the candidates into a compact array and construct array $C$ as the prefix sum of their biases, i.e. $c_i=\sum_{j=0}^i{w_i}$ with $c_0=0$. During the sampling process, we randomly generate a number $x$ within the range $[0,c_d)$, then use binary search to determine the interval it belongs to. Specifically, if we find a value $k$ such that $c_{k-1} \leq x < c_k$, then the edge labeled $k$ is the result of sampling. Since we use binary search, the sampling time complexity is $O(\log d)$, while the time complexity for constructing array $C$ is $O(d)$. Figure~\ref{sampling}(c) illustrates an example of the ITS method.

\stitle{Rejection sampling} is commonly used in cases with dynamic bias because it does not require maintaining a table or array like the alias method or ITS. It randomly selects one candidate $i$ with equal probability then decides whether to accept or reject it. We randomly generate a number within the range $[0,max(w))$, where $max(w)$ indicates the maximum bias across all the candidates. 
If this number is less than $w_i$, we accept the candidate; otherwise, we reject it and repeat the above steps to select another candidate. The sampling time depends on the distribution of all the biases and the expected time complexity for rejection sampling is $O(\frac{d\cdot max(w)}{\sum_i w_i})$. In Figure~\ref{sampling}(d), the uncolored area is the rejection area. If we select this area, we need to perform resampling. 

\section{Motivation and {\name} Overview}

\textbf{Motivation.} None of the existing Monte Carlo sampling methods, to the best of our knowledge, is well-suited for sampling on dynamically changing graphs. First, the Alias method~\cite{yang2019knightking,wang2021skywalker,wangl2023optimizing} requires $O(d)$ time complexity for accommodating one edge update, where $d$ is the degree of the affected vertex, although it enjoys $O(1)$ sampling time complexity.
Second, while rejection sampling presents constant time complexity for updating, its rejection rate could be high, see~\cite{yang2019knightking,jangda2021accelerating}.
Third, ITS sampling also enjoys fast updating time but suffers from non-trivial $O(\log d)$ sampling complexity, which is especially true for real-world graphs that reach billions of neighbors~\cite{pandey2020c,gong2023gsampler}.

\textbf{{\name} overview.}
Figure~\ref{workflow} illustrates the major data structures, sampling, and graph updating workflow of {\name}. Particularly, the sampling space is partitioned into various groups. We further build a inter-group sampling space for {\name} to choose the group of interest for sampling. 

\begin{figure}[h]
    \centerline{\includegraphics[width=0.7\linewidth]{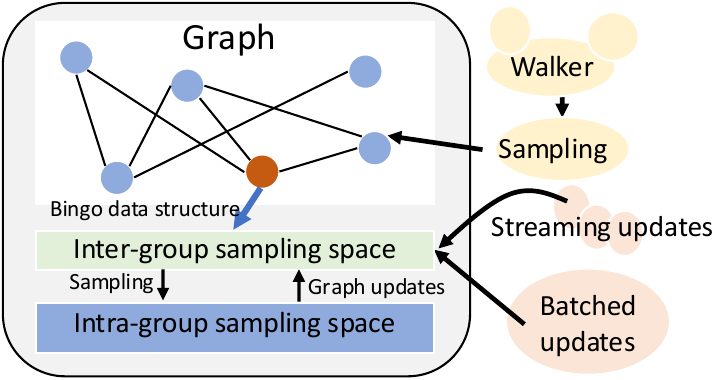}}
    \caption{{\name} workflow. The right top is sampling while the right bottom is updating the graph.
    \vspace{-.1in}
    }
    \label{workflow}
\end{figure}

{\name} features two functionalities: random walk query and graph update. (i) During random walk, when a walker reaches a vertex, it samples from inter-group to intra-group sampling spaces. First, it samples on the per-vertex inter-group sampling space to decide which group to further sample. Second, {\name} moves to that particular group for sampling. This process repeats until the end of the walk.
(ii) 
The updating procedure follows the opposite direction. That is, it begins by deciding which groups will experience updates for a graph update. Subsequently, those groups will be updated. Finally, the per-vertex inter-group sampling space is updated. When batched updates are experienced, we will only update the inter-group space once.

\section{{\name}: Radix-based Bias Decomposition for Random Walk}\label{SM}

\subsection{{\name}: Bias Decomposition Algorithm}

\stitle{High-level intuition:}

When handling vertices with many neighbors, the complexity of rebuilding alias tables or other auxiliary structures increases linearly with the number of neighbors, making traditional sampling techniques computationally expensive. The core idea behind {\name} is to apply a transformation that mitigates this growth by leveraging the binary representation of sampling biases $w_i$. Specifically, by decomposing biases into buckets corresponding to powers of $2$, we effectively reduce the number of values involved in recomputing the alias table from the total number of neighbors to the logarithm of the maximum bias value. This transformation significantly reduces the complexity of incorporating the graph updates. 

\stitle{Sampling space construction:}

We follow a two-step approach for sampling space construction.
(i) We perform a radix-based bias decomposition, decomposing each bias into sub-biases according to the bit positions. Formally, for bias $w_i$, we have a decomposition function $D$ that decomposes $w_i$ into a set of $K$ sub-biases, where $K$ is the number of bits. We term this set as $D(w_i)$. To put it more concretely, we have
\begin{equation}
    \label{eq:dwi}
   D(w_i)=\{2^k|w_i\wedge2^k\neq 0, k=0,1,\dots,K-1\},
\end{equation}
where $\wedge$ represents bitwise AND in this paper. (ii) We reorganize all the $D(w_i)$'s according to the bit positions, $k$, by grouping the sub-biases with the same bit position into the same group. We define the reorganized bias group as $W(p_k)$, where $p_k$ is the $k$-th bit position:

  \vspace{-.05in}
\begin{equation}
    \label{eq:wpk}
    W(p_k)=\sum_{i=0}^{d-1}{w_i\wedge2^k}.
\end{equation}

Notice that if $w_i\wedge2^k=0$, $w_i$ does not contribute to $W(p_k)$, which is equivalent to the fact that $2^k$ is not in $D(w_i)$.

\stitle{Hierarchical sampling.}
{\name} performs a two-stage-based sampling on the aforementioned data structure as follows: 
During stage (i), termed inter-group sampling, we select a group of interest via Monte Carlo sampling. Of note, the bias of a group is the sum of biases in this group. Therefore, the transition probability for group $p_k$ is:
\begin{equation}\label{eq:group_prob}
    P(p_k)=\frac{W(p_k)}{\sum_{j=0}^{K-1} W(p_j)}.
\end{equation}

At stage (ii), we proceed to intra-group sampling. Since group $p_k$ maintains all the neighbors whose bias bit at radix position $k$ is 1, neighbors belonging to the same sub-bias group have equal transition probability.

Thus, we can perform unbiased sampling by randomly picking a vertex $v_i$ from group $p_k$.
The probability is:
\begin{equation}\label{eq:pvp}
    P(v_i|p_k)=\frac{w_i\wedge2^k}{W(p_k)}.
\end{equation}
The corresponding original edge of this sub-bias is the final selected edge.

\begin{theorem}(Correctness). {\name} ensures that the probability of choosing each neighbor remains the same before and after radix-based bias factorization, i.e., Equation (\ref{transition probability}) holds for {\name}'s sampling.
\end{theorem}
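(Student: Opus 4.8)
The plan is to establish correctness directly via the law of total probability. Because {\name} can reach a neighbor $v_i$ only by first selecting some radix group $p_k$ (stage~(i)) and then drawing $v_i$ inside that group (stage~(ii)), the effective probability of emitting $v_i$ is the marginal $\hat{P}(v_i)=\sum_{k=0}^{K-1}P(p_k)\,P(v_i\mid p_k)$, with the two per-stage factors supplied by Equations~(\ref{eq:group_prob}) and~(\ref{eq:pvp}). The entire task is to show this marginal equals the target transition probability in Equation~(\ref{transition probability}).

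First I would substitute the two closed forms and exploit the term-wise cancellation: in the product $\frac{W(p_k)}{\sum_{j}W(p_j)}\cdot\frac{w_i\wedge 2^k}{W(p_k)}$ the factor $W(p_k)$ cancels, leaving $\frac{w_i\wedge 2^k}{\sum_{j}W(p_j)}$. Summing over $k$ pulls the common denominator out, giving $\hat{P}(v_i)=\big(\sum_{k=0}^{K-1}(w_i\wedge 2^k)\big)\big/\big(\sum_{j=0}^{K-1}W(p_j)\big)$. Here the groups with $w_i\wedge 2^k=0$ contribute zero automatically, which is exactly right since $v_i$ is absent from those groups by the definition in Equation~(\ref{eq:dwi}); no explicit membership bookkeeping is needed.

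The two remaining identities are the heart of the argument. For the numerator I would invoke exactness of the binary decomposition, namely $w_i=\sum_{k=0}^{K-1}(w_i\wedge 2^k)$, since each term equals $2^k$ precisely when bit $k$ of $w_i$ is set and $0$ otherwise. For the denominator I would swap the order of the double sum, $\sum_{j=0}^{K-1}W(p_j)=\sum_{j=0}^{K-1}\sum_{i=0}^{d-1}(w_i\wedge 2^j)=\sum_{i=0}^{d-1}\sum_{j=0}^{K-1}(w_i\wedge 2^j)=\sum_{i=0}^{d-1}w_i$, where the inner sum again reconstructs $w_i$. Combining the two collapses the marginal to $\hat{P}(v_i)=w_i/\sum_{j=0}^{d-1}w_j$, which is precisely Equation~(\ref{transition probability}).

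Since the proof is essentially a one-line marginalization, I do not anticipate a substantive technical obstacle; the only points meriting care are (a) confirming the $K$-bit representation is \emph{exact}, i.e. $w_i<2^{K}$ so that no high-order mass is truncated by the decomposition, and (b) a sanity check that both stages are genuine distributions, that $\sum_{k}P(p_k)=1$ and $\sum_{i}P(v_i\mid p_k)=1$ within each nonempty group, both of which follow from the very same two identities. I would close by noting that the derivation uses nothing about the biases beyond their admitting an exact finite binary expansion, so the correctness guarantee is independent of the particular application-specific meaning of $w_i$.
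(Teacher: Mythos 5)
Your proof is correct and follows essentially the same route as the paper's: marginalize over the groups via the law of total probability, substitute Equations~(\ref{eq:group_prob}) and~(\ref{eq:pvp}), cancel $W(p_k)$, and invoke the two identities $\sum_k (w_i\wedge 2^k)=w_i$ and $\sum_j W(p_j)=\sum_i w_i$. Your version is in fact slightly more careful than the paper's, since you state explicitly the sum interchange and the exactness condition $w_i<2^K$ that the paper leaves implicit.
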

\begin{proof}
Although we have transformed the sampling space, the probability of selecting each neighbor remains unchanged because the total bias associated with each neighbor is preserved. Formally, for
    a vertex $v_i$ to be chosen, it can be chosen through all the groups $v_i$ belongs to. Therefore, we arrive at:
    \begin{equation}
        P(v_i)=\sum_{k=0}^{K-1} (P(p_k)\cdot P(v_i|p_k)).
    \end{equation}

    Replacing $P(p_k)$ by Equations~\ref{eq:group_prob} and $P(v_i|p_k)$ by Equation~\ref{eq:pvp}, we arrive at:
    \begin{equation}
        P(v_i)=\sum_{k=0}^{K-1} \frac{w_i\wedge2^k}{\sum_{j=0}^{K-1} W(p_j)}= \frac{\sum_{k=0}^{K-1} w_i\wedge2^k}{\sum_{i=0}^{d-1} w_i}=\frac{w_i}{\sum_{i=0}^{d-1} w_i},
    \end{equation}
which is the same as when we perform sampling before radix-based bias factorization. 
\end{proof}

\begin{figure}[h]
    \centerline{\includegraphics[width=0.95\linewidth]{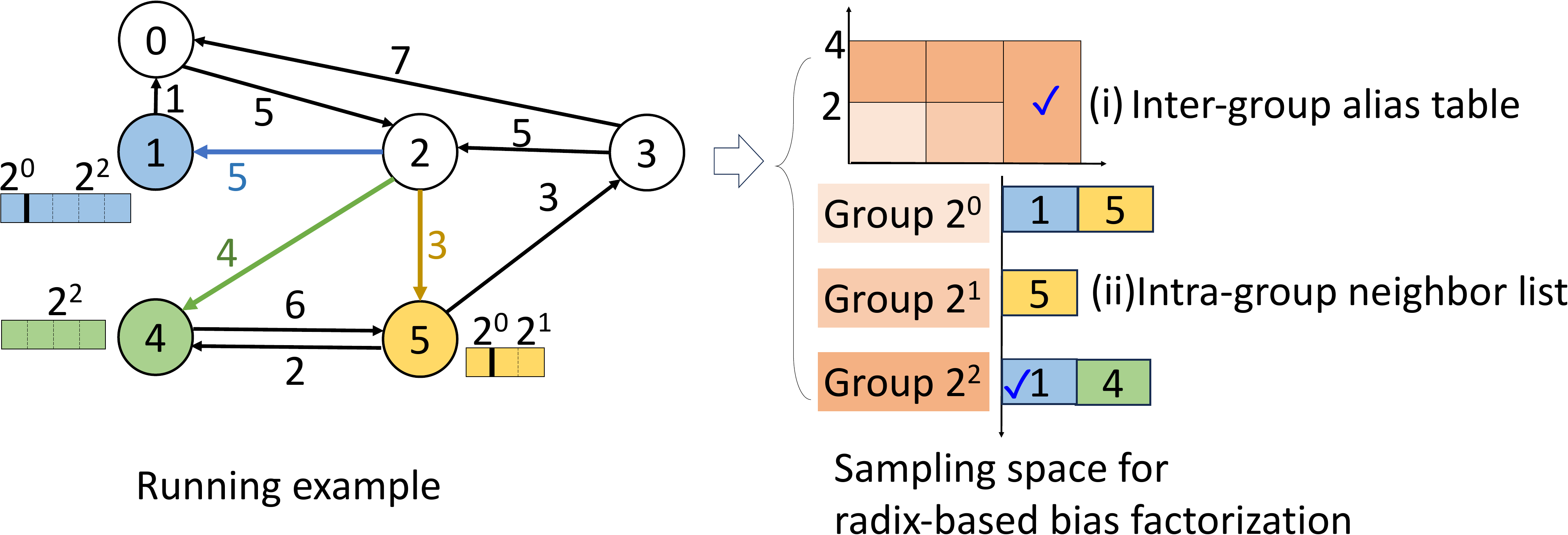}}
    \caption{{\name} on the running example.
    \vspace{-.12in}
    }
    \label{SAMPLING}
\end{figure}

\stitle{{\name} running example.}
Figure~\ref{SAMPLING} exemplifies how to implement {\name} for the running example. Using vertex 2 as an example, following the $(src, dst, bias)$ format,  there are three candidate edges from vertex 2, i.e., $(2,1,5)$, $(2,4,4)$, and $(2,5,3)$. Applying binary factorization to the biases of all three neighbors, we get group $2^0$ contains $\{1,5\}$, group $2^1$ contains $\{5\}$, and group $2^2$ contains $\{1,4\}$. Here, we directly use the out-neighbor ID to represent the edge. Therefore, the biases of these three groups are 2, 2, and 8. We adopt the alias table method to build the (i) inter-group sampling space, which is shown in the top right of Figure~\ref{SAMPLING}.

During sampling, step (i) selects group $2^2$ in the inter-group alias table. Subsequently, step (ii) relies on unbiased sampling to select neighbor 1 in group $2^2$.

\stitle{Remarks on {\name} sampling benefits.} {\name} enjoys fast sampling speed. Particularly, in stage (i), we adopt the alias method in stage(i), which offers constant time complexity. Further, because the number of groups is small, 
updating the alias table will be fast. For stage (ii), while the sampling space is large, it is uniform sampling. Therefore, the sampling speed is, again, fast (i.e., constant time complexity).

\subsection{{\name} for Streaming Updates}\label{streaming}

Now, we introduce our designs to achieve fast graph updates (i.e., insertion and deletion) in {\name}.

\begin{figure}[h]
    \centerline{\includegraphics[width=0.85\linewidth]{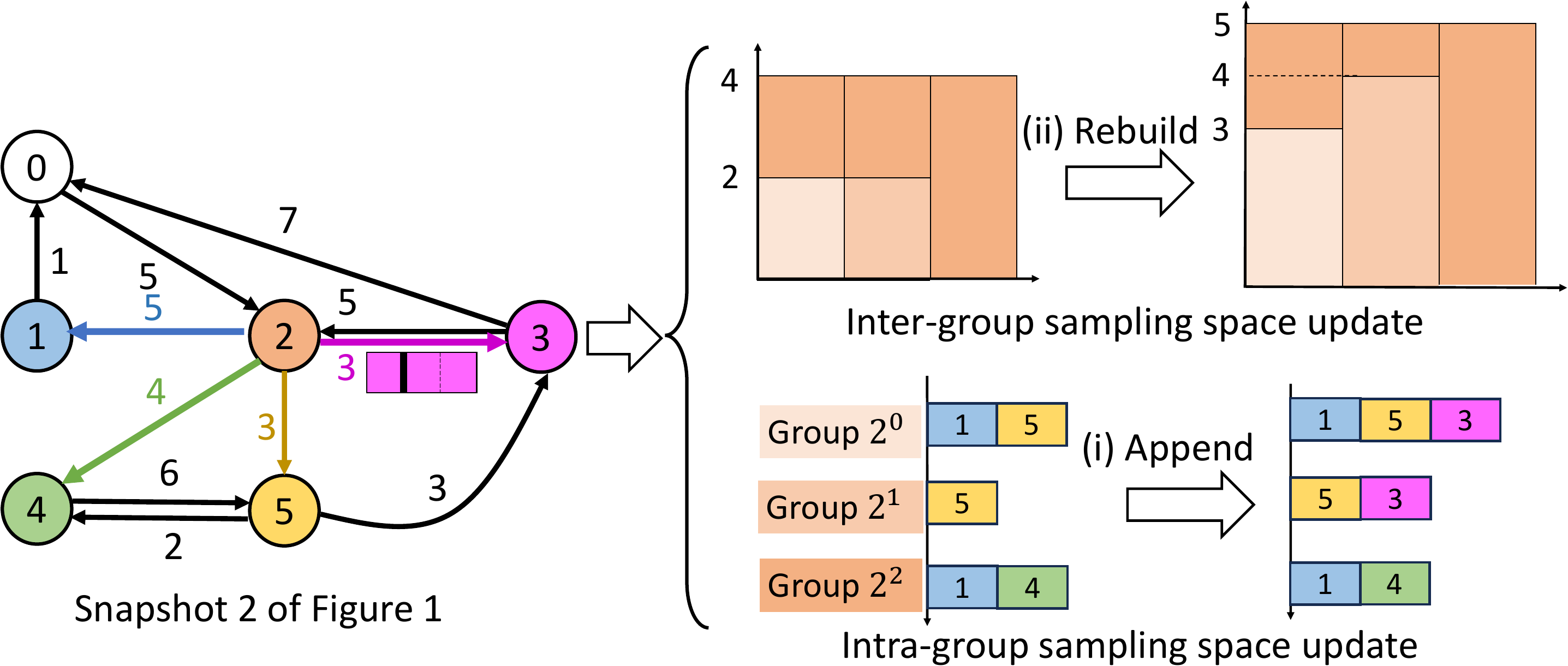}}
    \caption{{\name} insertion operation.}
    \label{insertion}
     \vspace{-.1in}
\end{figure}

\stitle{\textbf{Insertion:}} Figure~\ref{insertion} exemplifies the insertion operation of {\name}. We first split the newly inserted edge following Equation~\ref{eq:dwi} and reorganize these sub-biases following Equation~\ref{eq:wpk}.
In this example, the new edge (2, 3, 3)'s bias is decomposed into $3=2^0+2^1$. We thus split this edge into two sub-biases, one belonging to group $2^0$, and the other to $2^1$. 

During insertion, we perform the intra-group updates before the inter-group one: First, one can simply (i) append the new edge to the end of each neighbor list array. Second, the inter-group alias table is updated based on the new biases in all groups, which is shown as (ii) rebuild in Figure~\ref{insertion}.

\begin{figure}[h]
    \centerline{\includegraphics[width=1\linewidth]{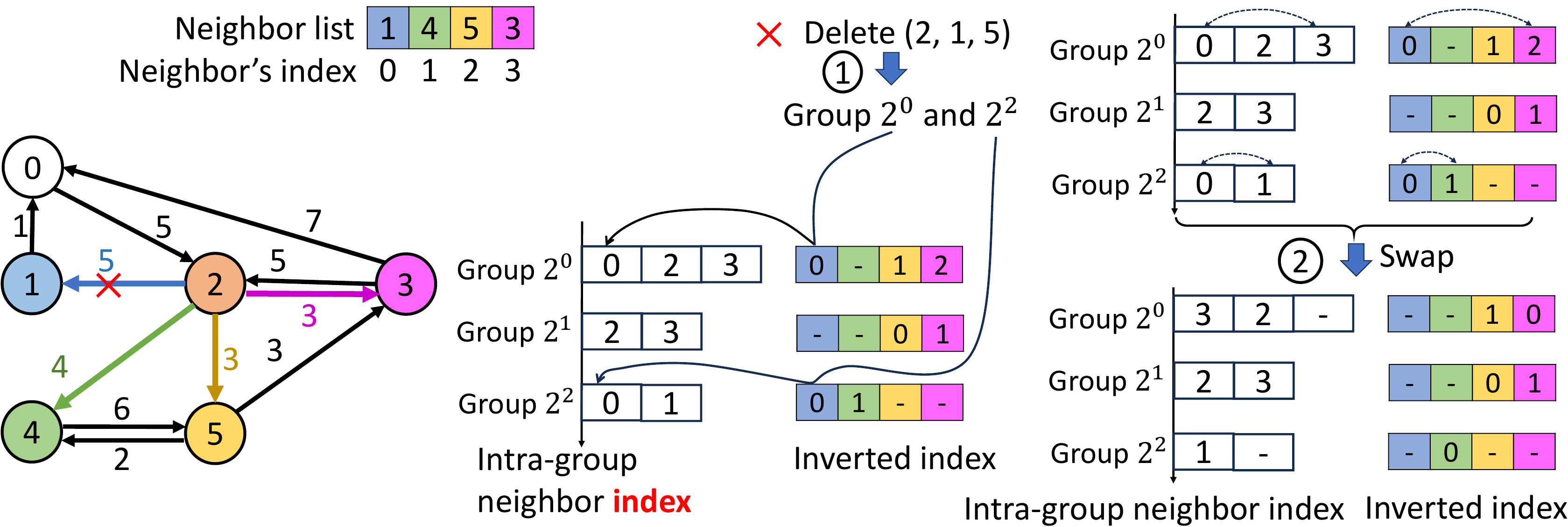}}
    \caption{{\name} deletion operation.
    \vspace{-.17in}
    }
    \label{deletion}
\end{figure}

\stitle{Deletion:} Figure~\ref{deletion} illustrates the edge deletion operation of {\name}. This process contains four steps: (i) we perform a radix-based bias decomposition to identify which groups this edge has contributed sub-biases to.
(ii) One needs to locate this edge in those identified groups. (iii) We swap this edge with the edge in the tail of that group to maintain a compact intra-group neighbor list for $O(1)$ unbiased sampling. (iv) We rebuild the inter-group sampling space similar to insertion in Figure~\ref{insertion} thus omitted for brevity.

We introduce two design changes to achieve a near-constant time complexity for deletion. First, we store the neighbor index in each group as opposed to the neighbor ID, which was the case in Figures~\ref{SAMPLING} and~\ref{insertion}, because the neighbor index can locate the edge in the neighbor list in O(1) complexity. 
This will help the swap operation in step (iii).
Second, we introduce an inverted index that tracks where each neighbor index was stored in each group. {Formally, this inverted index maintains a mapping from the neighbor index to the position in each group.} This reduces the complexity of locating edges in step (ii) from $O(d)$ to $O(1)$.

Figure~\ref{deletion} exemplifies our deletion operation. Assuming we want to delete edge (2,1,5). In step (i), we identify this edge contributes to groups $2^0$ and $2^2$. (ii) Because the edge (2,1,5)'s index is 0, we obtain the location of this edge in these two groups at the first entry of the corresponding inverted index, both of which are 0 in the inverted index. (iii) We swap them with the tail of that group. Using group $2^0$ as an example, the neighbor index 0 will swap with neighbor index 3. Because we store the neighbor index in the intra-group neighbor index, we use this index to locate that we should update the pink location to 0. In a nutshell, one can use the content in the intra-group neighbor index to locate where we should change in the inverted index and vice versa.

In addition to insert/delete edges, other graph updates, e.g., deleting a vertex, and updating the edge bias, can be either implemented with insertion and/or deletion operations or supported straightforwardly. We omitted their descriptions for the sake of space constraints.

 \vspace{-.05in}
\subsection{{\name} for Floating-Point Biases}

While one might suggest that radix sort~\cite{merrill2010revisiting} is the closest related work to {\name} when handling floating-point biases, {\name}, faces more stringent requirements. That is, radix sort only cares about whether a bias is bigger or smaller than the other bias to sort out the order while {\name} requires to know one bias is exactly how much bigger or smaller than the other to build the sampling space.
For more details about radix sort of floating-point values, we refer the readers to~\cite{radixSortRevisited}.
In that regard, {\name} cannot adopt radix sort ideas to handle floating-point biases in sampling.

\stitle{\name's approach.}
{\name} handles floating-point biases in a four-step approach. (i) We empirically determine an amortization factor \( \lambda \), which is used to round floating-point values to proportional integer values.  (ii) For each bias, we decompose this bias into an integer part and a decimal part. (iii) We further perform radix decomposition for the integer part of all the biases (similar to our integer alone case) while leaving the decimal part in one group. If the decimal part is taken, we will adopt ITS or rejection sampling. (iv) We perform the hierarchical sampling to choose the edge of interest. Of note, $\lambda$ is properly chosen such that the sum of the decimal parts is very low. Then the sampling time complexity remains low. See Section~\ref{subsec:complex} for detailed discussion.

\begin{figure}[h]
    \centerline{\includegraphics[width=0.95\linewidth]{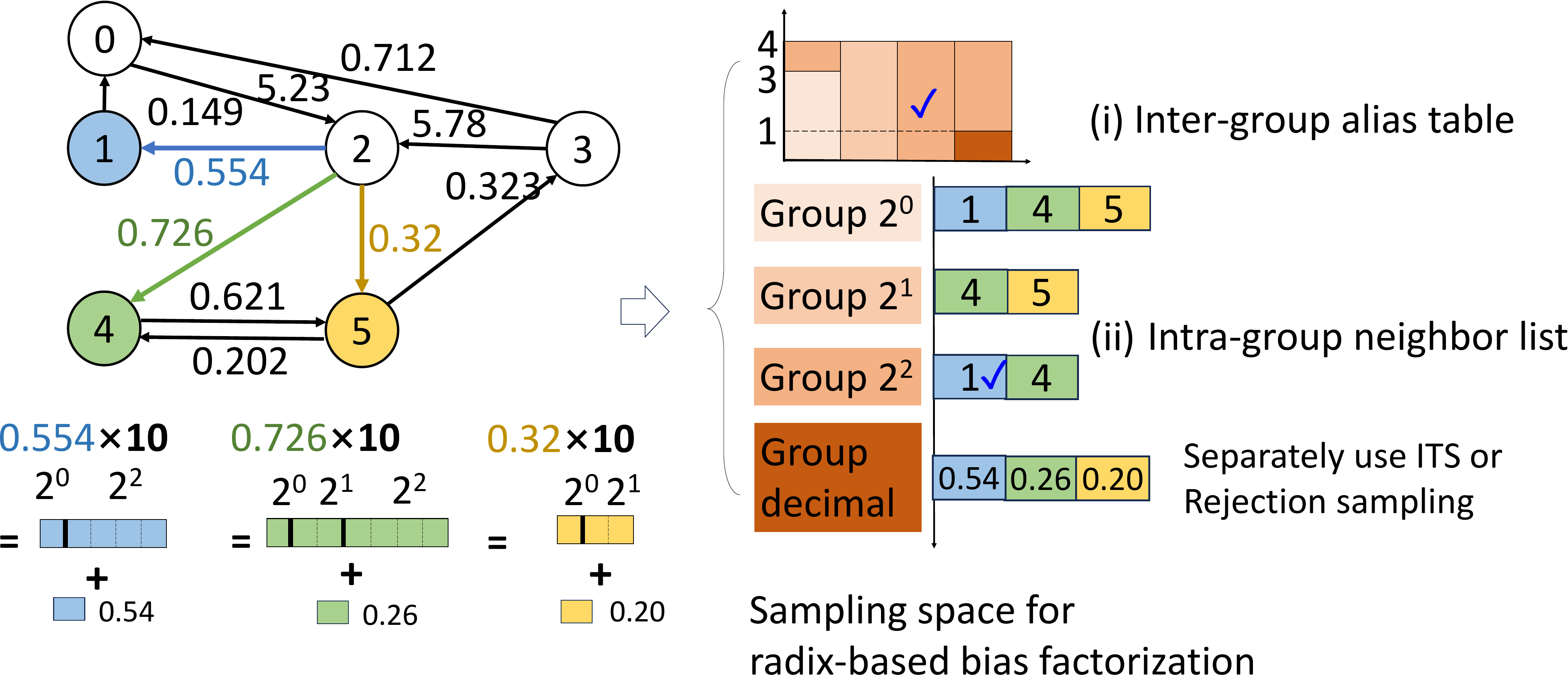}}
    \caption{{\name} with floating-point biases.}
    \label{SAMPLING2}
\end{figure}

Figure~\ref{SAMPLING2} illustrates how {\name} works on the floating-point biases. Still using vertex 2 as an example, there are three candidate edges from vertex 2, i.e., $(2,1,0.554)$, $(2,4,0.726)$, and $(2,5,0.320)$. We multiply all the biases by $10$. Then we get edges with new biases, $(2,1,5.54)$, $(2,4,7.26)$, and $(2,5,3.20)$.
Applying binary factorization to the integer part of biases of all three edges, we get groups $2^0$, $2^1$ and $2^2$, respectively, containing neighbors $\{1,4,5\}$, $\{4,5\}$, and $\{1,4\}$. Besides, we collect the decimal part as a new group, i.e., \{1, 4, 5\}. We adopt the alias table method to build the (i) inter-group sampling space, which is shown in the top right of Figure~\ref{SAMPLING2}. Since the decimal group only takes up a small area of the alias table, for most of the cases, our sampling happens in the integer part, whose sampling complexity is constant.

\subsection{Complexity Analysis}\label{subsec:complex}

Table \ref{timecmp} presents the complexity of {\name} vs four common Monte Carlo sampling methods. 
The superiority of {\name}'s time complexity can be clearly observed from this table. However, {\name} consumes more memory than others. We will introduce a series of optimizations for memory consumption in the following section.

\stitle{Time complexity}. 
For a vertex with a degree of $d$, we assume each neighbor has the bias of $w_i$. (i) Sampling:
Given that both inter- and intra-group sampling (alias table sampling and unbiased sampling) have a time complexity of $O(1)$, the total sampling time remains $O(1)$. 
(ii) Insertion: For each of the $K$ groups, we preform $O(1)$ insertions. The size of $K$ could be derived by $K=log(max(w_i))$. Additionally, we need to reconstruct the alias tables for each of these groups. Therefore, the overall time complexity of insertion is $O(K)$.
(iii) Deletion: as deletion operations within each group are identical, each single deletion step in each group only consumes $O(1)$ constant time, and the time complexity of deletion is also $O(K)$. 

{For floating-point biases, we only need to analyze the sampling of the intra-group part (inter-group sampling complexity is O(1)). Let $W_I$ and $W_D$ denote the sums of the biases for, respectively, integer and decimal parts. The sampling time complexity becomes $O(\frac{W_D}{W_I+W_D}\cdot \frac{d\cdot max(w_i)}{\sum_i w_i}+\frac{W_I}{W_I+W_D})$, where use rejection sampling as an example. In this expression, the first term means inter-group sampling selects the decimal group, while the second term means the integer group. Through adjusting $\lambda$, we can ensure that $\frac{W_D}{W_I+W_D}<\frac{1}{d}$, which keeps our sampling complexity as $O(1)$. In Figure~\ref{SAMPLING2}, we set $\lambda=10$ which leads $\frac{W_D}{W_I+W_D}=\frac{1}{16}<\frac{1}{d}=\frac{1}{3}$. This $\lambda$ ensures $O(1)$ sampling complexity.

The updating time complexity of {\name} is only related to the number of groups, i.e., $K$, which is small (usually no more than $32$ or $64$), which is the number of bits for integer and long integer.A larger radix base can further reduce $K$, which is briefly discussed in Section~\ref{sec:disc} (see supplement material).
\begin{table}
    \caption{{Complexity comparison for a vertex: {\name} vs. Alias, ITS, and Rejection sampling, where $K$ and $d$ are the numbers of groups and degree of the vertex.}}
    \vspace{-.1in}
    \begin{center}
        {\scriptsize
        \begin{tabular}{|c|c|c|c|c|}
            \hline
            \textbf{Name} & \textbf{Insertion}      & \textbf{Deletion}       & \textbf{Sampling}                                         & \textbf{Memory}         \\
            \hline
            {\name}       & $O(K)$                  & $O(K)$                  & O(1)                                                      & $O(d\cdot K)$           \\
            \hline
            Alias         & \multirow{2}{*}{$O(d)$} & \multirow{2}{*}{$O(d)$} & \multirow{2}{*}{$O(1)$}                                   & \multirow{2}{*}{$O(d)$} \\
            Method        &                         &                         &                                                           &                         \\
            \hline
            ITS           & $O(1)$                  & $O(d)$                  & $O(log_2d)$                                               & $O(d)$                  \\
            \hline
            Rejection     & \multirow{2}{*}{$O(1)$} & \multirow{2}{*}{$O(d)$} & \multirow{2}{*}{$O(\frac{D\cdot{max(w)}}{\sum_i w_i})$} & \multirow{2}{*}{$O(d)$} \\
            Sampling      &                         &                         &                                                           &                         \\
            \hline

\end{tabular}
}
        \label{timecmp}
    \end{center}
    \vspace{-.2in}
\end{table}

\stitle{Space complexity.}
Our naive design repeats the inverted index $K$ times, i.e., the number of groups, with the size of each inverted index as $d$. This leads to a memory consumption of $d\cdot K$. Further, each edge appears in $t=popc(w_i)$ groups, i.e., the number of nonzero bits in bias $w_i$. This leads to a memory consumption of $d\cdot t$ for the intra-group neighbor index list. Combined, our method amplifies the memory consumption from $d$ to $d\cdot(K+t)$. Since each vertex might not appear in all groups, we derive $t\leq K$. Therefore, the space complexity of {\name} is $O(d\cdot K)$.

\section{{\name} System Implementation and Optimizations}
\label{sec:opt}

This section implements {\name} on GPUs with two optimizations, i.e.,  memory consumption optimization (Section~\ref{subsec:space}) and batched graph updates optimization (Section~\ref{subsec:bached_updates}). 
\vspace{-.1in}
\subsection{Adaptive Group Representation}
\label{subsec:space}

\begin{figure}[h]
    \vspace{-.2in}
    \centerline{\includegraphics[width=1\linewidth]{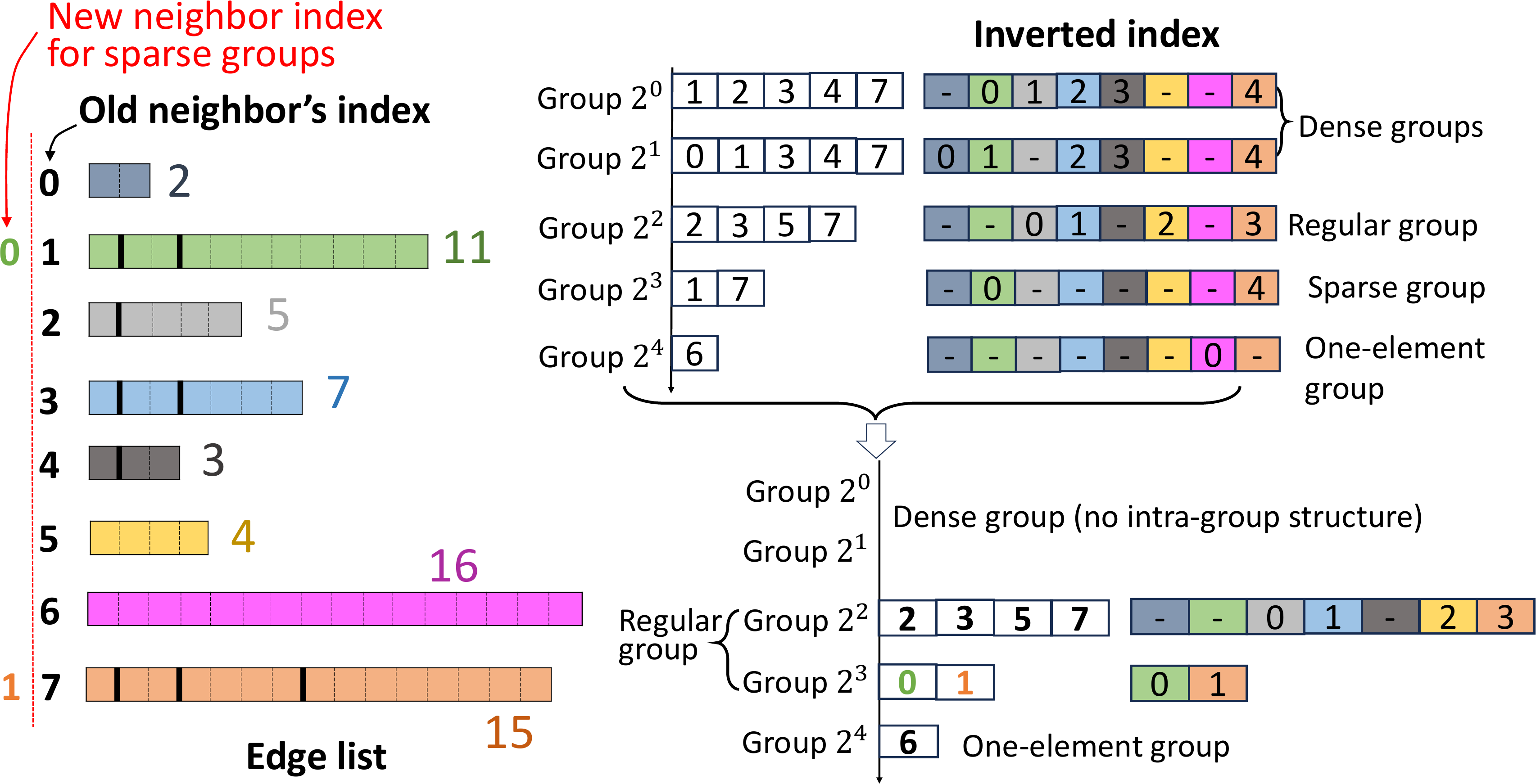}}
    \vspace{-.1in}
    \caption{{\name} adaptive group representation.
    }
    \label{adaptation}
\end{figure}

\noindent
Figure~\ref{adaptation} presents our solutions to reduce both $K + t$ and $d$ (see Section~\ref{subsec:complex}), which reduces the memory consumption for {\name}. Our intuition is as follows: we first separate dense and sparse groups from the regular group, which permits simpler data structures (i.e., less memory consumption). Additionally, we observe that one-element groups constitute a significant proportion of the data (see Figure~\ref{Moptimization}), and handling them separately enables further memory savings through simplified representation. Based on the cardinality of different groups $G_i$, we divide them into four categories: dense, one-element, sparse, and regular groups, following Equation~\ref{eq:group}:

\begin{equation}\label{eq:group}
{\small
    \text{Group\ $G_i$} \in
    \begin{cases}
        \text{Dense group,}       & \text{$\frac{|G_i|}{d}$ >$\alpha$\%;}                    \\
        \text{One-element group,} & \text{|$G_i$| = 1;}                                         \\
        \text{Sparse group,}      & \text{$\frac{|G_i|}{d}$ < $\beta$\% and $|G_i| \neq 1$;} \\
        \text{Regular group,}     & \text{Otherwise.}
    \end{cases}
    }
\end{equation}

Based on our heuristic study, we set $\alpha=40$ and $\beta=10$ in our design for the optimal performance.

\stitle{Dense group} is often the groups with less significance. Using group $2^0$, i.e., the group with the least significant bit as an example, a neighbor $v_i$ falls into this group when $w_i\mod2=1$, or simply with an odd bias. Statistically, half of the neighbors (50\%) has a chance of falling into the $2^0$ group.

We propose a radical change for dense groups, i.e., we maintain neither the intra-group neighbor index list nor the inverted index for a dense group. This helps reduce the memory cost for both $t$ and $K$ (mentioned above). Besides, these groups are not accessed frequently because they often appear in the groups with less significant bits. Therefore, we save the memory for the largest intra-group neighbor index lists and the inverted indices with potentially small or no sampling efficiency impacts.

We adopt a rejection sampling on the original neighbor list for dense groups. Even if dense groups are selected, we can still maintain a low rejection ratio because the rejection ratio is below (1-$\alpha$\%)=60\%. Particularly, our rejection sampling works as follows: (i) Our inter-group sampling selects a particular dense group. (ii) For intra-group sampling, we select a neighbor from the original neighbor list. Further, we use that {neighbor's bias} to AND (i.e., \&) the group radix of the chosen dense group. If the result is not zero, this sampling is accepted. Otherwise, we repeat the sampling from the intra-group sampling.

In Fig~\ref{adaptation}, group $2^0$ and group $2^1$ are dense groups, since they contained 62.5\% of edges. If inter-group sampling selects group $2^0$, we will further randomly select one neighbor from the original neighbor list. Assuming we selected the 5-th neighbor, we use its bias 4 to AND $2^0$, which returns 0. This means our sampling is rejected. We will repeat this process until we find a neighbor in group $2^0$.

\stitle{One-element group.}
With skewed bias distributions, chances are group(s) with the most significant values might only contain one element. For those groups, there is no need to maintain either the inverted index or the intra-group neighbor index list. Group} $2^4$ in Figure~\ref{adaptation} belongs to this case.

\begin{figure}[t!]
    \centerline{\includegraphics[width=.8\linewidth]{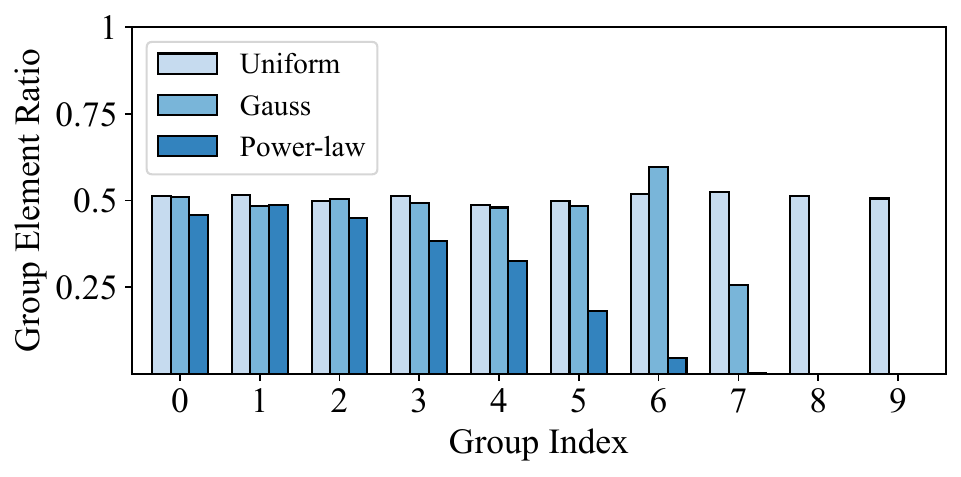}}
    \vspace{-.2in}
    \caption{Group element ratio of various distributions.
    \vspace{-.2in}
    }
    \label{distribution}
\end{figure}

\stitle{Sparse group.}
Our sparse group optimizations mainly focus on reducing the memory consumption of the inverted index.
{As shown in Figure~\ref{distribution}, for two out of three cases, groups with higher powers tend to have fewer edges, which means the inverted indexes in these groups can be very sparse.} 
Consequently, we maintain a new neighbor list that only contains edges with larger biases. This leads to a much smaller neighbor list, thus a smaller inverted index for sparse groups.

As shown in Figure~\ref{adaptation}, neighbor indices 1, 6, and 7 are the edges with bias larger than 8. Since neighbor 7 also belongs to the one-element group, we exclude it from the sparse group. Therefore, our new neighbor list contains \{1, 7\}, whose cardinality is 2. In this case, our inverted index has a size of 2 for group $2^3$, which is significantly smaller than the original inverted index with a size of 8.

\stitle{Regular group.} After filtering out dense, one-element, and sparse groups, we might still have a few groups. These groups require the full inverted index and the original intra-group neighbor index list. In Figure~\ref{adaptation}, group $2^2$ is such a group.

\subsection{{\name}'s Parallel Batched Graph Updates}
\label{subsec:bached_updates}

Figure~\ref{fig:update}(a) presents the overall workflow of our parallel batched graph updates. On the CPU platform, we first put the graph updates of the same vertex together. Subsequently, we move these ordered update requests to GPU. For each vertex $v_i$, we, in order, perform three steps, i.e., insert, delete, and rebuild. Of note, one might insert a just deleted edge back; we thus allow duplicated insertions of the same edge with a time stamp. When deletion happens to a duplicated edge, we delete the earlier version first.

During insertion, we first insert each edge into the graph following the dynamic graph format~\cite{busato2018hornet}. Subsequently, we perform four types of insertions differently. \ul{For dense group}, we simply do nothing because it does not maintain any sampling-related data structures. \ul{One-element group}: We derive whether this group evolves into a sparse/regular/dense group based on all the insertions. Subsequently, it follows how an insertion is done to a sparse/regular/dense group to perform this particular insertion. \ul{Sparse group:} We first append this new neighbor to the sparse group neighbor list. Second, we append to the intra-group neighbor index list and the inverted index of this group for this specific neighbor. \ul{Regular group}: we append the edge to the intra-group neighbor index list and update the inverted index.

\begin{figure}
    {\includegraphics[width=0.99\linewidth]{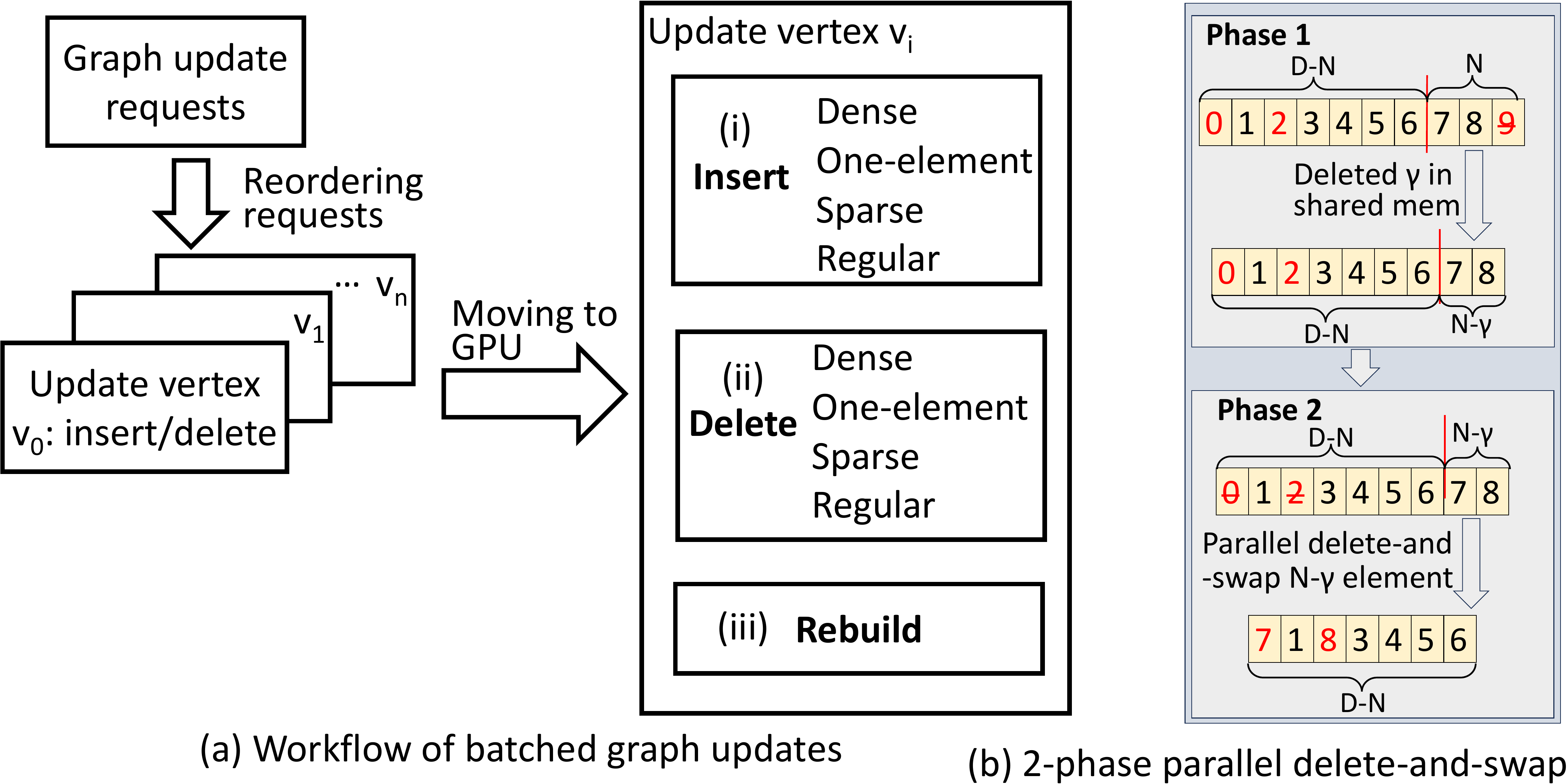}}
    \caption{Parallel batched graph updates.
    \vspace{-.2in}
    }
    \label{fig:update}
\end{figure}

Deletion is more complex than insertion. \ul{On-element group:} We simply remove this group. \ul{Dense group:} we will perform a rebuild if this dense group evolves into a different group type (details discussed later).
\ul{Sparse/Regular group:} Since parallel delete-and-swap could introduce bubbles in the intra-group neighbor index list that prevent unbiased sampling, we introduce a design to support parallel delete-and-swap as below:

Figure~\ref{fig:update}(b) exhibits our two-phase parallel delete-and-swap N elements operation, which is a key contribution to our parallel deletion. The biggest concern of the parallel delete-and-swap is that the elements from the tail might also be deleted. Therefore, if one uses the concurrently deleted tail element to fill in the entry that is to be deleted, one fills the to-be-deleted entry with a voided entry. For instance, one wants to delete entry 0 and use entry 9 to fill in entry 0. However, what if entry 9 will also be deleted? In this case, we cannot use entry 9 to fill entry 0.

In our design, we perform this deletion in two phases. (i) We load N elements from the tail to GPU's shared memory, if fit. Otherwise, we keep them in global memory. Subsequently, we will delete all the elements that are supposed to be deleted in these N elements. We assume we have deleted $\gamma$ elements. (ii) Since we have already deleted $\gamma$ elements from the list, we only need to delete $N-\gamma$ more elements from the list. In phase (i), we are sure the remaining $N-\gamma$ elements will not be deleted. Therefore, we are guaranteed to delete $N-\gamma$ elements at the front with these $N-\gamma$ will-not-deleted elements to fill the entry.

Our rebuild mainly tackles group-type transformations of the following two cases: (i) from other group formats to a regular group, and (ii) From other group formats to a sparse group format. The reason for case (i) is one needs to rebuild the entire intra-group neighbor index list and inverted index, which is expensive during insertion or deletion. For case (ii), one needs to scan the original group to rebuild the new neighbor index, which is again better performed after all the insertions and deletions are accomplished.

\section{Evaluation}\label{E}

{\name} is implemented using CUDA/C++ with approximately 2,000 lines of code. It is designed with an efficient GPU-based architecture to handle graph processing and random walks. Below are the implementation details: (i) {\name} stores the graph and related metadata directly on the GPU, ensuring fast access and efficient computations. (ii) Before each random walk computation, Bingo integrates all the graph updates, maintaining the correct order of operations to ensure consistency. This correctness is enforced for both streaming and batched random walks. (iii) Bingo performs random walks in a step-by-step manner, where each step involves sampling to select the next node for the next step of random walks. (iv) Bingo treats each vertex as an individual object and utilizes eight main kernels to support various random walk applications, i.e., streaming\_insert, streaming\_delete, batched\_insert, batched\_delete, random\_walk\_node2vec, random\_walk\_deepwalk, random\_walk\_ppr, and random\_walk \_simple\_sampling.

\subsection{Experimental Setup}

The evaluation is performed on a Linux server equipped with two $2.8 GHz$ Intel(R) Xeon(R) Silver $4309$Y CPUs, $16$ physical cores, $512$ GB of memory, and four NVIDIA A100 GPUs, each with $80$ GB of HBM2e memory.

\stitle{Graph Applications.}
We examine {\name} and the state-of-the-art with three applications: biased DeepWalk, node2vec, Personalized PageRank (PPR). 
For all of them, we initialize the vertex count number of random walkers.
Further, for DeepWalk and node2vec, we set the default walk length as $80$. We set the hyper-parameters $p=0.5$ and $q=2$ for node2vec. The parameters for DeepWalk and node2vec are identical to that from KnightKing~\cite{yang2019knightking}. We put the termination probability of PPR as $1/80$, which offers an expected walk length of $80$. 

\begin{table}[h]
    \caption{Graph dataset ($K=10^3$, $M=10^6$).
            \vspace{-.1in}
}
    \begin{center}
    {\scriptsize
        \begin{tabular}{|c|c|c|c|c|c|}
            \hline
            \multirow{2}{*}{\textbf{Dataset} } & \multirow{2}{*}{\textbf{Abbr.} } & {\textbf{Vertex}} & \textbf{Edge}  & \textbf{Avg}    & \textbf{Max}    \\
                                               
            &                & \textbf{count}    & \textbf{count} & \textbf{degree} & \textbf{degree} \\
                                               
            \hline
            Amazon  & AM   & 403.4K  & 3.4M    & 8.4  & 10              \\
            \hline
            
            Google    & GO & 875.7K   & 5.1M   & 5.8  & 456             \\
            
            \hline
            Citation  &CT&3.8M&16.5M&4.4&770\\
            
            \hline
            LiveJournal                        & LJ                               & 4.8M              & 68.5M          & 14.3            & 20.3K           \\
            \hline
            
            Twitter     & TW         & 41.7M             & 1,468.4M       & 35.2            & 770.2K          \\
            \hline
        \end{tabular}
   }
        \label{dataset}
    \end{center}
\end{table}

\begin{table*}[]
\caption{{\name} vs. SOTA on time and memory consumption. }
 \vspace{-.05in}
\label{tab:overall}
{\normalsize
\begin{tabular}{c|c|c|c|c|c|c|c}
\hline
\multirow{2}{*}{\textbf{Algorithms}} & \multirow{2}{*}{\textbf{Frameworks}} & \multicolumn{5}{c|}{\textbf{Runtime (s),  Memory Consumption (GB)}} & \textbf{Avg.} \\ \cline{3-7}
 &  & \multicolumn{1}{c|}{Amazon} & \multicolumn{1}{c|}{Google} &Citation& \multicolumn{1}{c|}{LiveJournal} &  Twitter &  \textbf{speedup}\\ \hline
 
 & {\name} & \multicolumn{1}{c|}{\colorbox{green}{0.51}, 0.70} & \multicolumn{1}{c|}{\colorbox{green}{1.01}, 1.54} &\colorbox{green}{0.62}, 2.09& \multicolumn{1}{c|}{\colorbox{green}{1.99}, 6.34}  & \colorbox{green}{19.63}, 51.22 & -\\\cline{2-8} 
\textbf{DeepWalk}& KnightKing & \multicolumn{1}{c|}{2.64, 0.17} & \multicolumn{1}{c|}{4.39, 0.17} &12.63, 0.80& \multicolumn{1}{c|}{70.92, 2.54} &  1114.26, 46.14 & 24.46 \\ \cline{2-8} 

 \textbf{Insertion}& gSampler & \multicolumn{1}{c|}{1.55, \colorbox{pink}{1.25}} & \multicolumn{1}{c|}{2.44, \colorbox{pink}{1.99}} & 3.81, \colorbox{pink}{7.95} &\multicolumn{1}{c|}{22.97, \colorbox{pink}{8.80}} & 403.91, \colorbox{pink}{71.24} & 8.74 \\ \cline{2-8}
 &FlowWalker&0.55, 0.70&1.85, 1.02&4.26, 2.99&11.01, 4.04&25125.48, 43.97 &259.05
 \\ \hline

 & {\name} & \multicolumn{1}{c|}{\colorbox{green}{0.25}, 0.70} & \multicolumn{1}{c|}{\colorbox{green}{0.32}, 1.44} &\colorbox{green}{0.16}, 2.09& \multicolumn{1}{c|}{\colorbox{green}{1.46}, \colorbox{pink}{6.24}} &  \colorbox{green}{19.11}, 51.12 & -\\ \cline{2-8} 
\textbf{DeepWalk}& KnightKing & \multicolumn{1}{c|}{2.48, 0.17} & \multicolumn{1}{c|}{4.43, 0.17} &12.41, 0.78& \multicolumn{1}{c|}{71.45, 2.54}  & 1136.11, 46.14 & 41.94\\ \cline{2-8} 
 \textbf{Deletion}& gSampler & \multicolumn{1}{c|}{1.47, \colorbox{pink}{1.01}} & \multicolumn{1}{c|}{2.33, \colorbox{pink}{1.51}} &
 3.44, \colorbox{pink}{5.54}
 &\multicolumn{1}{c|}{21.46, 5.92} & 376.54, \colorbox{pink}{60.61} & 13.81 \\\cline{2-8}
 &FlowWalker&0.56, 0.70&1.84, 1.02&4.20, 2.99&10.44, 4.04&25112.71, 43.97&271.11
 \\\hline

 & {\name} & \multicolumn{1}{c|}{\colorbox{green}{0.25}, 0.87} & \multicolumn{1}{c|}{\colorbox{green}{0.64}, 1.38} &\colorbox{green}{0.44}, 4.62& \multicolumn{1}{c|}{\colorbox{green}{1.96}, 7.25} &  \colorbox{green}{18.94},  51.22 &-\\\cline{2-8}

\textbf{DeepWalk}& KnightKing & \multicolumn{1}{c|}{2.59, 0.17} & \multicolumn{1}{c|}{4.52, 0.22} &12.40, 0.78& \multicolumn{1}{c|}{67.43, 2.53}  & 1006.66, 46.21 & 26.63\\ \cline{2-8} 
 \textbf{Mixed}& gSampler & \multicolumn{1}{c|}{3.30, \colorbox{pink}{1.25}} & \multicolumn{1}{c|}{2.64, \colorbox{pink}{2.14}} &
 4.39, \colorbox{pink}{7.72}
 &\multicolumn{1}{c|}{ 17.60, \colorbox{pink}{10.08}} & 307.16, \colorbox{pink}{73.19} & 10.50 \\\cline{2-8}
 &FlowWalker&0.57, 0.70&1.94, 1.02&4.33, 2.99&10.72, 4.04&25137.66, 43.97&269.57
 
 \\\hline

 & {\name} & \multicolumn{1}{c|}{ \colorbox{green}{0.63}, 0.79} & \multicolumn{1}{c|}{ \colorbox{green}{1.18}, 1.64} & \colorbox{green}{0.66}, 2.29&\multicolumn{1}{c|}{ \colorbox{green}{6.35}, 7.03} &  \colorbox{green}{66.75}, \colorbox{pink}{66.55} & -\\\cline{2-8}
\textbf{node2vec}& KnightKing & \multicolumn{1}{c|}{13.75, 0.33} & \multicolumn{1}{c|}{15.47, 0.44} &58.23, 1.00&\multicolumn{1}{c|}{202.23, 2.86} &  2526.14, 49.11 &38.57  \\ \cline{2-8} 
 \textbf{Insertion}& gSampler & \multicolumn{1}{c|}{4.32, \colorbox{pink}{1.29}} & \multicolumn{1}{c|}{4.61, \colorbox{pink}{2.05}} &12.12, \colorbox{pink}{8.24}& \multicolumn{1}{c|}{48.09, \colorbox{pink}{10.79}} & 695.43, 60.01 &9.42 
 \\ \cline{2-8}
 &FlowWalker&0.64, 0.70&2.92, 1.02&4.60, 2.99&18.57, 4.04&60108.03, 43.97&182.78\\
 \hline
 
& {\name} & \multicolumn{1}{c|}{\colorbox{green}{0.35}, 0.79} & \multicolumn{1}{c|}{\colorbox{green}{0.57}, 1.54} &\colorbox{green}{0.18}, 2.29& \multicolumn{1}{c|}{\colorbox{green}{5.92}, 6.93} & \colorbox{green}{66.51}, \colorbox{pink}{66.55} & - \\\cline{2-8}
\textbf{node2vec}& KnightKing & \multicolumn{1}{c|}{13.66, 0.33} & \multicolumn{1}{c|}{12.94, 0.44} &55.50, 1.00& \multicolumn{1}{c|}{176.38, 2.86} &  2550.16, 49.11& 87.64 \\ \cline{2-8} 
 \textbf{Deletion}& gSampler & \multicolumn{1}{c|}{3.81, \colorbox{pink}{1.06}} & \multicolumn{1}{c|}{4.14, \colorbox{pink}{1.59}} & 10.88, \colorbox{pink}{5.85}&\multicolumn{1}{c|}{43.42, \colorbox{pink}{7.39}} & 672.04, 60.01 & 19.21\\
 \cline{2-8}
 &FlowWalker&0.66, 0.71&2.94, 1.03&4.60, 2.99&18.62, 4.04&60007.51, 43.97&187.60\\
  \hline
 
& {\name} & \multicolumn{1}{c|}{\colorbox{green}{0.27}, 0.97} & \multicolumn{1}{c|}{\colorbox{green}{0.81}, 1.58} &\colorbox{green}{0.46}, 5.40& \multicolumn{1}{c|}{\colorbox{green}{6.14}, 8.13} &  \colorbox{green}{66.64}, 66.55 & - \\\cline{2-8}
\textbf{node2vec}& KnightKing & \multicolumn{1}{c|}{18.34, 0.33} & \multicolumn{1}{c|}{29.99, 0.43} &36.49, 1.00& \multicolumn{1}{c|}{109.78, 2.86}  & 2510.34, 49.22 & 47.97\\ \cline{2-8} 
 \textbf{Mixed}& gSampler & \multicolumn{1}{c|}{6.01, \colorbox{pink}{1.29}} & \multicolumn{1}{c|}{4.36, \colorbox{pink}{2.19}} &
 13.87, \colorbox{pink}{8.02}
 &\multicolumn{1}{c|}{44.03, \colorbox{pink}{12.07}} & 671.59, \colorbox{pink}{ 72.88} & 15.01 \\ \cline{2-8}
 &FlowWalker&0.66, 0.70&2.99, 1.02&4,64, 2.99&18.70, 4.04&59840.22, 43.97&183.45\\
 \hline
 
 & {\name} & \multicolumn{1}{c|}{\colorbox{green}{0.56}, 0.70} & \multicolumn{1}{c|}{\colorbox{green}{1.10}, 1.54} &\colorbox{green}{0.62}, 2.09 &\multicolumn{1}{c|}{\colorbox{green}{2.01}, 6.34} &\colorbox{green}{20.21}, 51.22 & - \\\cline{2-8}
\textbf{PPR}& KnightKing & \multicolumn{1}{c|}{2.67, 0.29} & \multicolumn{1}{c|}{4.52, 0.37} &11.86, 0.78&\multicolumn{1}{c|}{74.08, 2.54} &  1172.27, 46.14 & 24.57 \\ \cline{2-8} 
 \textbf{Insertion}& gSampler & \multicolumn{1}{c|}{1.67,\colorbox{pink}{ 1.43}} & \multicolumn{1}{c|}{2.42,\colorbox{pink}{ 2.35}} &
 3.86, \colorbox{pink}{9.74}
 &\multicolumn{1}{c|}{28.36, \colorbox{pink}{10.60}} & 519.07, \colorbox{pink}{52.76} & 10.24 \\ \cline{2-8}
 &FlowWalker&0.57, 0.70&1.93, 1.02&3.80, 2.99&17.25, 4.04&24230.61, 43.97&243.28\\
 \hline

& {\name} & \multicolumn{1}{c|}{\colorbox{green}{0.27}, 0.70} & \multicolumn{1}{c|}{\colorbox{green}{0.34}, 1.44} &\colorbox{green}{0.16}, 2.09 &\multicolumn{1}{c|}{\colorbox{green}{1.47}, 6.24} &  \colorbox{green}{19.69}, 51.12 & - \\\cline{2-8}
\textbf{PPR}& KnightKing & \multicolumn{1}{c|}{2.56, 0.29} & \multicolumn{1}{c|}{4.13, 0.36} &11.32, 0.78 &\multicolumn{1}{c|}{69.85, 2.54} & 1122.42, 46.14 & 39.38\\ \cline{2-8} 
 \textbf{Deletion}& gSampler & \multicolumn{1}{c|}{1.33, \colorbox{pink}{1.13}} & \multicolumn{1}{c|}{2.39, \colorbox{pink}{1.75}} & 3.56, \colorbox{pink}{6.74} &\multicolumn{1}{c|}{21.97, \colorbox{pink}{7.12}} &  476.82, \colorbox{pink}{72.54} & 14.67\\\cline{2-8}
 &FlowWalker&0.53, 0.70&1.94, 1.02&3.78, 2.99&17.04, 4.04&24181.06, 43.97& 254.19\\
 \hline

& {\name} & \multicolumn{1}{c|}{\colorbox{green}{0.26}, 0.87} & \multicolumn{1}{c|}{\colorbox{green}{0.66}, 1.38} &\colorbox{green}{0.44}, 4.62& \multicolumn{1}{c|}{\colorbox{green}{1.74}, 7.35} &  \colorbox{green}{19.96}, 51.22 & - \\ \cline{2-8}
\textbf{PPR}& KnightKing & \multicolumn{1}{c|}{2.58, 0.29} & \multicolumn{1}{c|}{4.25, 0.37} &11.08, 0.77& \multicolumn{1}{c|}{64.76, 2.53}  & 989.14, 46.21 &25.66\\ \cline{2-8} 
 \textbf{Mixed}& gSampler & \multicolumn{1}{c|}{3.39, \colorbox{pink}{1.44}} & \multicolumn{1}{c|}{2.58, \colorbox{pink}{2.53}} &
 4.52, \colorbox{pink}{9.45}
 &\multicolumn{1}{c|}{19.73, \colorbox{pink}{12.19}} & 559.58, \colorbox{pink}{53.97} & 13.32 \\
\cline{2-8}
 &FlowWalker&0.55, 0.70&1.97, 1.02&3.86, 2.99&17.23, 4.04&24243.25, 43.97&247.67\\ \hline
\end{tabular}
}
\end{table*}

\stitle{Datasets.} TABLE~\ref{dataset} presents the five real-world graph datasets, retrieved from Konect~\cite{konect} and SNAP~\cite{snapnets}, which we use for evaluation. 
We follow a three-step design to create dynamic updates: (i) we split the original graph dataset into two sets: A (original edges - $10\cdot BATCHSIZE$ edges) and B ($10\cdot BATCHSIZE$ edges) randomly. (ii) We randomly determine whether we want to delete or insert an edge. (iii) If we want to delete an edge, we will delete a randomly selected edge from set A. Otherwise, we randomly choose an edge from the set B and add that to set A. We perform this $10\cdot BATCHSIZE$ times to generate a sequence of $10\cdot BATCHSIZE$ updates. In this paper, we set $BATCHSIZE=100k$. We use the edges in set A as the edges to initialize the test. 

\ul{Dynamic updates.}
We generate three types of updating situations for each dataset in TABLE~\ref{dataset}: ``Insertion'', ``Deletion'', and ``Mixed'', which contain insertion only, deletion only, and mixed graph updates with an equal number of insertions and deletions, respectively. Except for Section~\ref{subsec:sota}, which uses all these three update situations, other sections only use the ``Mixed'' update situation (if not stated otherwise).

\ul{Bias.}
We generate the bias for most of the tests based on the degree of vertices, which naturally follow power law distribution (given all datasets are real-world graphs~\cite{chakrabarti2004r}). We also evaluate {\name} under floating-point bias and bias with different distributions in Section~\ref{subsec:microbenchmarks}.

\stitle{Evaluation Workflow.}
We perform (i) $BATCHSIZE$ number of updates and (ii) graph application computation. Since we generate 10 graph updates of $BATCHSIZE$, we repeat the aforementioned steps (i) and (ii) 10 rounds. We report the total time of these 10 rounds. 
By default, we conduct tests on GPU-based systems using a single GPU and on CPU-based systems using all the $16$ physical cores of a single machine.

\subsection{{\name} vs. the State-Of-The-Art (SOTA)}
\label{subsec:sota}

\stitle{Choice of SOTA.} We compare {\name} with three representative projects: KnightKing \cite{yang2019knightking}, gSampler \cite{gong2023gsampler} and FlowWalker \cite{mei2024flowwalker}. KnightKing is the first general-purpose CPU-based distributed graph random walk engine that employs the alias method for static biased sampling and the rejection method for dynamic biased sampling. Because Flashmob \cite{yang2021random}, a more recent work, does not support biased sampling, we choose KnightKing as the CPU-based SOTA. 
gSampler is one of the most recent GPU-based graph sampling systems, featuring matrix-based APIs for efficient execution. In addition, FlowWalker, another recent GPU-based sampling framework, is based on parallel reservoir sampling and serves as a strong baseline for comparison. 
Since the above systems only support static or streaming graphs, we develop their open-source code for our evaluation. Specifically, we reload or reconstruct the corresponding structure after each round of updates.
{Notice that we do not conduct comparative experiments with the dynamic graph processing systems, e.g., Wharf \cite{papadias2022space} or Hornet \cite{busato2018hornet} (see Section \ref{sub:dynamic}), as Bingo focuses on sampling, which is different from these projects. We will analyze the difference between {\name} and these projects in Section \ref{RW}.}

TABLE~\ref{tab:overall} provides a comprehensive analysis of {\name} vs the SOTA on various datasets and applications. For ease of viewing, we color-coded the \colorbox{green}{shortest runtime} and \colorbox{pink}{largest} \colorbox{pink}{memory consumption}.
Briefly, {\name} consistently outperforms KnightKing, 
gSampler, and FlowWalker by $24.46-112.28\times$, 
$8.74-25.66\times$ and $182.78-271.11\times$, respectively.

When it comes to runtime, we observe four different insights: (i) {\name} outperforms all SOTA across all graph applications on all graph datasets and updating cases. (ii) {\name} enjoys more speedup on larger graphs with bigger degrees. Using DeepWalk-Insertion as an example, our speedup climbs from 5.17$\times$ (Amazon) to 1279.9$\times$ (Twitter).
(iii) {\name} offers higher speedup on deletion than insertion, with mixed operations sitting in the middle. The reason lies in the fact that dynamic arrays may need to allocate memory immediately during insertion. In contrast, memory released during deletion can be managed offline without incurring immediate overhead in our custom memory pool. In short, deletion is more friendly to {\name} than insertion.

\begin{figure*}
    \subfloat[Overall.]
    {\includegraphics[width=0.205\linewidth]{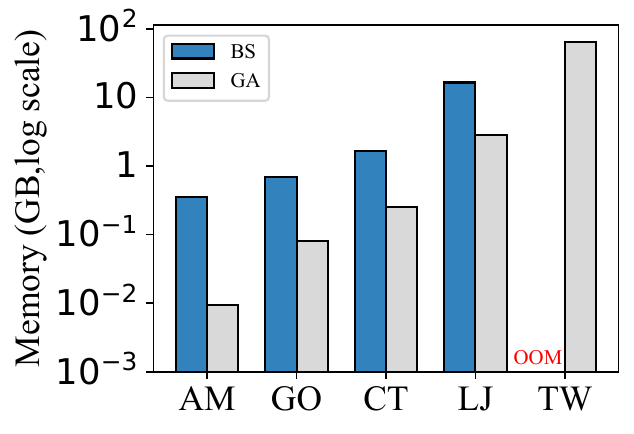}}
    \subfloat[Dense group.]
    {\includegraphics[width=0.19\linewidth]{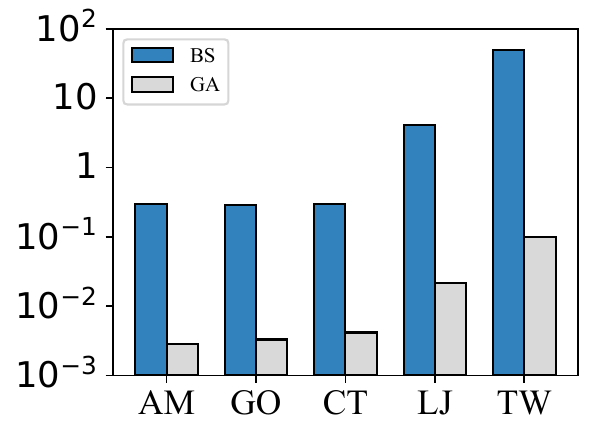}}
    \subfloat[One-element group.]
     {\includegraphics[width=0.19\linewidth]{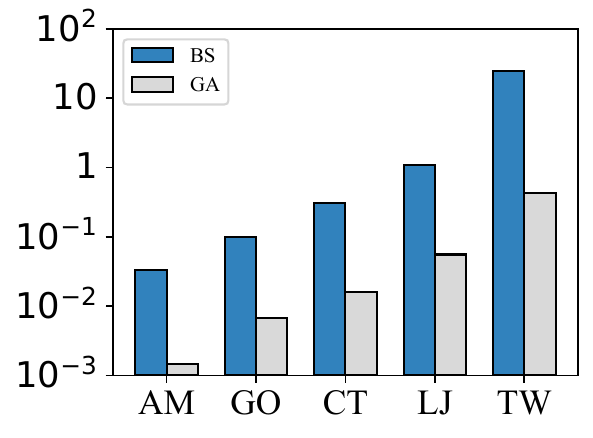}}
     \subfloat[Sparse group.]
     {\includegraphics[width=0.19\linewidth]{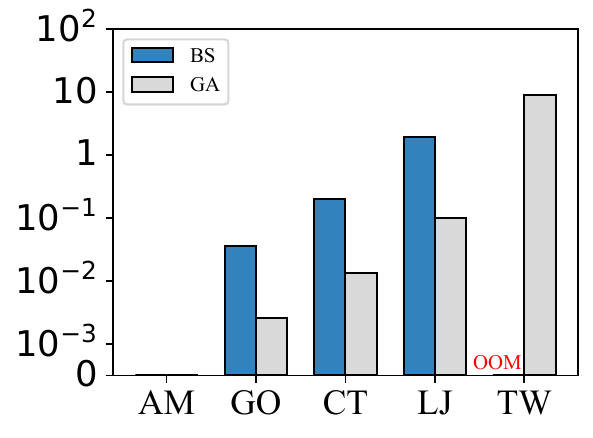}}
     \subfloat[Ratio of various groups.]
     {\includegraphics[width=0.19\linewidth]{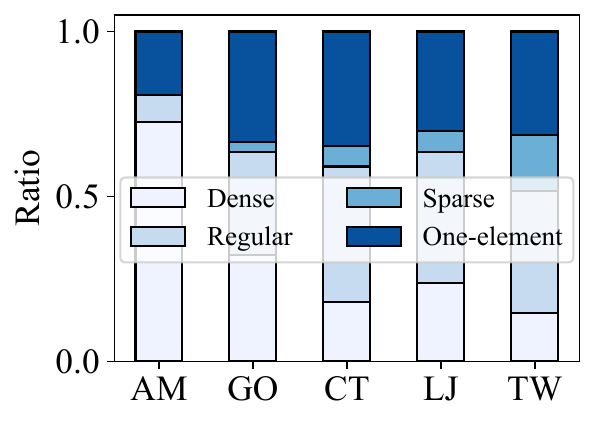}}
    \caption{Adaptive group representation impacts on memory consumption of {\name}, where BS, GA, respectively, represent ``BaSeline'' and ``Group Adaption optimization''. In BS, we use regular group format for all groups.}
    \label{Moptimization}  
\end{figure*}

Regarding memory consumption, we witness two key insights: (i) gSampler often consumes the most memory, followed by {\name}, then KnightKing and FlowWalker. The reason is that gSampler relies on matrix APIs, which will factor out a laundry list of memory costs. {\name} would consume more memory than KnightKing and FlowWalker because these two SOTAs rely on relatively simpler sampling space data structures or do not require such structures at all.
(ii) We notice a downside of {\name}. That is, {\name} consumes the most memory for three cases, i.e., DeepWalk-Deletion (Livejournal), node2vec-Insertion (Twitter), node2vec-Deletion (Twitter). The major reason is that these graphs contain more vertices with higher bias. That will render more regular groups in {\name}, which consumes more memory. Given {\name} outperforms the SOTA on larger graphs, we can slightly adjust $\alpha$ and $\beta$ in Equation~\ref{eq:group} when memory consumption is a concern. Further, {\name} is a distributed random walk system and can support higher radix factorizations, both of which can help mitigate the memory consumption problem faced by {\name}.

\subsection{Impacts of System Optimizations}

\textbf{Adaptive group representation.} 
Figure~\ref{Moptimization} illustrates the memory consumption savings brought by the adaptive group representation in {\name}. Figure~\ref{Moptimization}(a) shows that overall this optimization, on average, reduces the memory consumption from $14.6\times$ (GO) to $22.2\times$ (AM). Further, as shown in Figures~\ref{Moptimization}(b) - (d), Dense, One-element, and Sparse group representations, on average, reduce the memory by $323.67\times$, $21.51\times$, $6.41\times$, respectively. Overall, AM graph enjoys the most savings because this group contains the highest ratio of dense groups ($72.7$\%); see Figure~\ref{Moptimization}(e). It is worth noting that our GA optimization resolves the out-of-memory issue faced by the BS design for TW.

\begin{figure}[h]
    \subfloat[Insertion]{\includegraphics[width=0.34\linewidth]{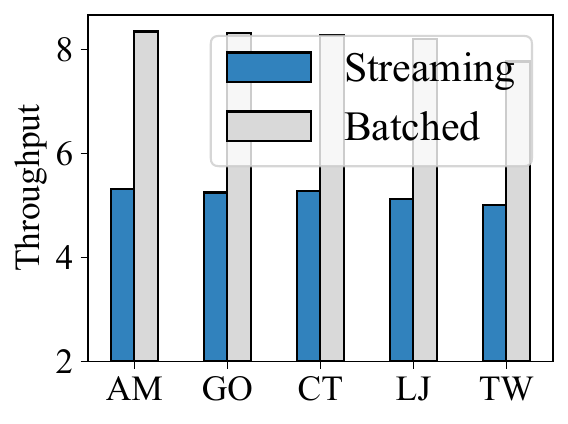}}
    \subfloat[Deletion]{\includegraphics[width=0.32\linewidth]{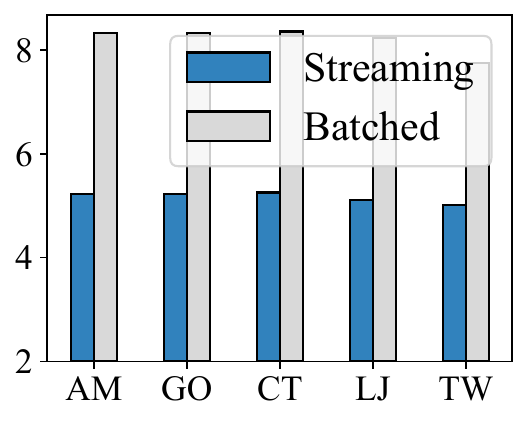}}
    \subfloat[Mixed]{\includegraphics[width=0.32\linewidth]{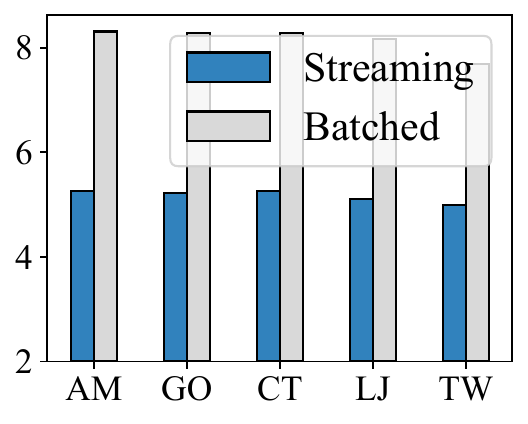}}
    \caption{Impacts of batched updates optimization.
    \vspace{-.2in}}
    \label{fig:stream}
\end{figure}

\stitle{Batched updates.}
Figure~\ref{fig:stream} exhibits the performance impacts of our batched graph update optimization.
In this study, we ingest ten $100K$ ``insertion'', ``deletion'', and ``mixed'' graph updates to {\name} in streamed vs. batched designs.
Overall, our batched update designs are, respectively, $1006.1$x, $1119.1$x, $992.5$x, faster than streaming update on ``insertion'', ``deletion'', and ``mixed'' update cases because (i) we can parallelize all the updates in batched updates, and (ii) we only perform one rebuild for all the updates. 
Further, ``delete'' enjoys the best speedup thanks to our 2-phase parallel delete-and-swap optimization. ``Mixed'' experiences the smallest speedup because one has to invoke both ``insert'' and ``delete'' kernels to perform the updates.

\begin{figure}[h]
     {\includegraphics[width=.8\linewidth]{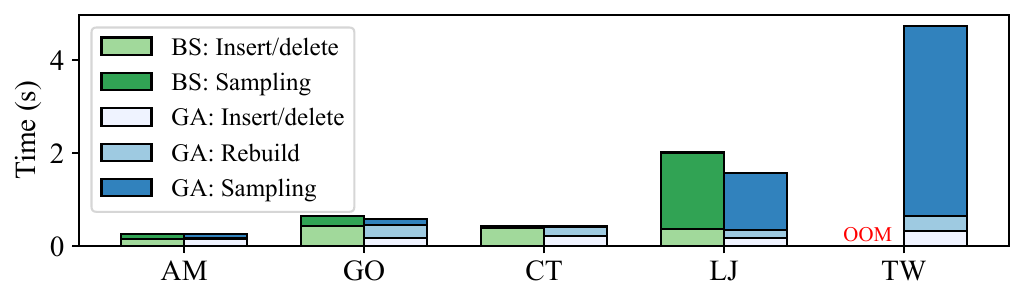}}
     \caption{Time consumption of BS vs GA, where BS and GA are defined in Figure~\ref{Moptimization}.}
          \label{fig:time_breakdown}
\end{figure}
\vspace{-.1in}

\begin{table}[h]
    \caption{Group conversion ratio in LJ graph.}
    \vspace{-.1in}
    \begin{center}
    {\scriptsize
        \begin{tabular}{|c|c|c|c|c|}
            \hline
            \textbf{Group Type} &Dense& Regular &Sparse  & One element  \\\hline
            Dense &---& 0.02\%    &0.01\%  &  0.47\%\\\hline
            Regular&0.01\%&---& <0.01\%&0.02\%\\\hline
            Sparse& <0.01\% & <0.01\% & --- & 0.14\%\\ \hline
            One element&0.05\%&0.03\%&0.01\%&---\\\hline
           
        \end{tabular}
   }
        \vspace{-.1in}
        \label{trans}
    \end{center}
\end{table}

\stitle{Impact on time cost.} 
Figure~\ref{fig:time_breakdown} presents the time consumption breakdown of with vs without GA optimization. Surprisingly, our GA optimization is, on average, 1.09$\times$ faster than BS in addition to the dramatic memory saving presented in Figure~\ref{Moptimization}. This speedup is offered from three aspects: (i) The sampling on GA optimization is $1.05$x - $1.61$x faster. (ii) Our insert/delete is faster than BS because updating one-element, sparse, and dense groups is faster, and (iii) as shown in TABLE~\ref{trans}, while rebuild introduces extra overheads compared to BS, the conversion from one group type to the other is very low for LJ graph. Particularly, the highest conversion rate is less than 0.47\%. This leads the time of GA insert/delete step together with the rebuild step to be merely 8\% slower than the BS insert/delete step for the worst case (AM).

\subsection{MicroBenchmarks}
\label{subsec:microbenchmarks}

This section uses DeepWalk with $100K$ mixed graph updates for microbenchmark evaluations.

\begin{figure}[h]
    {\includegraphics[width=0.45\linewidth]{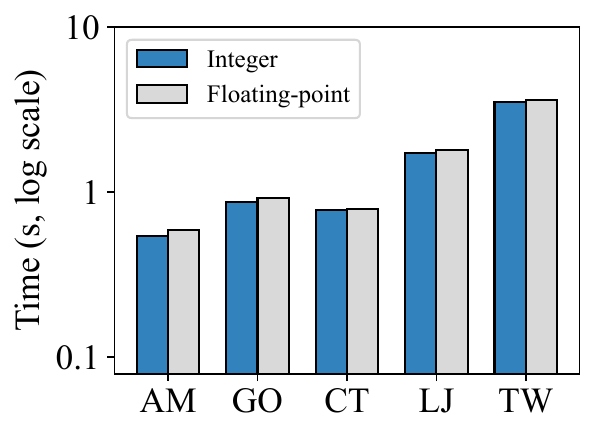}}
    {\includegraphics[width=0.45\linewidth]{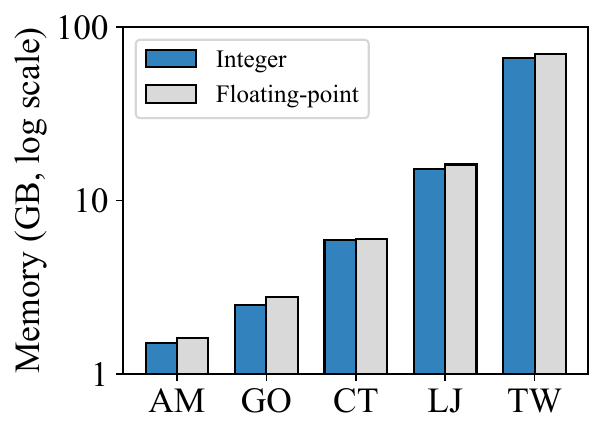}}
    \caption{{\name}: Integer bias vs. floating-point bias.}
    \label{fig:floating}
\end{figure}

\stitle{Floating-point vs. integer bias.} Figure~\ref{fig:floating} presents the performance of {\name} on the same dataset with different bias data types. For the fair comparison, the floating-point bias is the integer bias added with a random floating-point value between $0 - 1.00$.
Overall, the floating-point bias merely consumes, on average, 1.02$\times$ longer and 1.08$\times$ more memory than the integer counterpart, which is acceptable. 

\begin{figure}[h]
    \vspace{-.2in}
    \subfloat[Batch size.]{\includegraphics[width=0.33\linewidth]{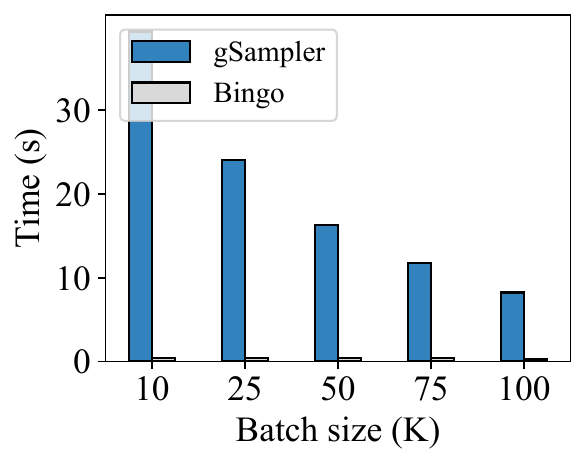}}
    \subfloat[Walk length.]{\includegraphics[width=0.30\linewidth]{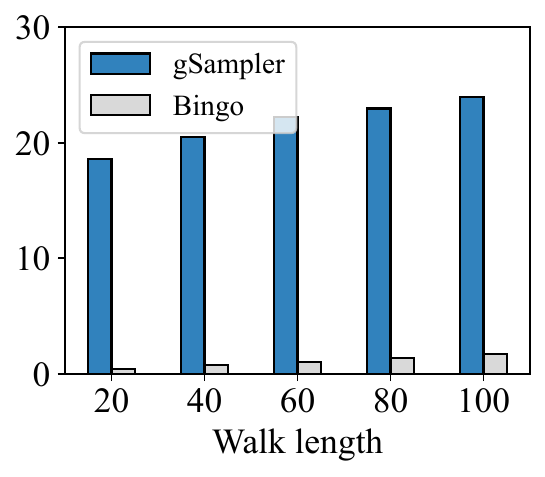}}
    \subfloat[Bias distribution.]{\includegraphics[width=0.36\linewidth]{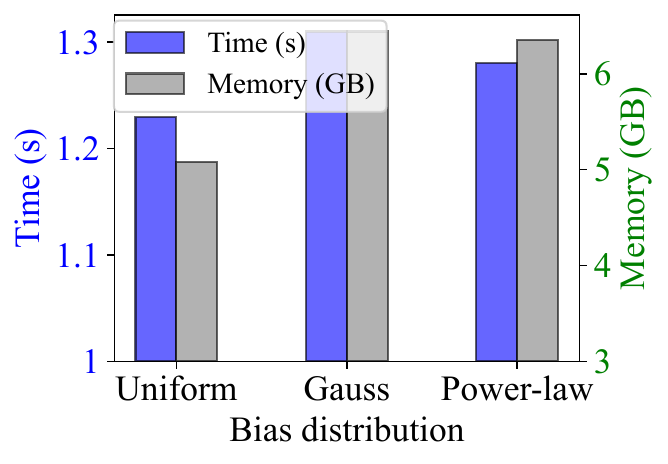}}
     \vspace{-.1in}
\caption{Varying evaluation configurations.}
\label{scal}
 \vspace{-.1in}
\end{figure}

\stitle{Varying evaluation configurations.}
Figure~\ref{scal}(a) shows the execution time of gSampler vs {\name} under different updating batch sizes for $1$ million updates on the LiveJournal dataset. We observe that the runtime of gSampler and {\name} decrease while the batch size increases. This is because larger batch size provides more parallel execution opportunities and helps reduce the rebuild time.

Figure~\ref{scal}(b) studies the performance of gSampler vs {\name} under different walk lengths. With the increase in walk lengths, indicating the increase in workloads, both gSampler and {\name} experience longer runtime. With the help of {\name} accelerated sampling, the gap between gSampler and {\name} also widens from $18.17$ s to $22.27$ s. 

Figure~\ref{scal}(c) evaluates the performance of {\name} under different bias distributions. {\name} on workload with a Uniform bias distribution consumes the least memory and time. This is because uniform bias distribution results in more dense groups and a lower rejection rate. In contrast, the other two distributions have relatively skewed bias distribution, requiring more memory and time to process.

\begin{figure}[h]
    \centering
    \subfloat[Updating time. {\name}\_I and {\name}\_D denote the time consumption of {\name} performing 1 million insertions, and 1 million deletions, respectively. FlowWalker\_R means FlowWalker reloads the graph after applying both the 1 million insertions and 1 million deletions.]{
        \includegraphics[width=0.8\linewidth]{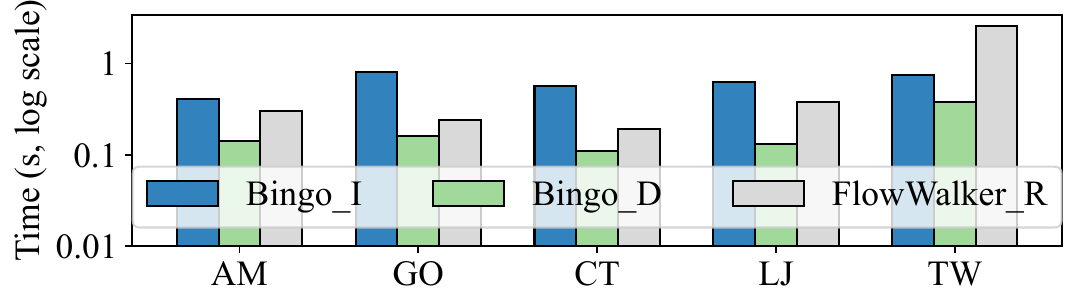}
    }\\[-0ex] 
    \subfloat[Sampling time.]{
        \includegraphics[width=0.8\linewidth]{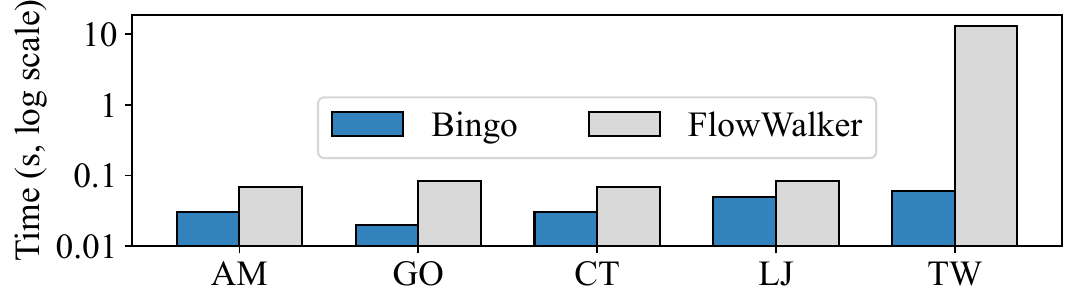}
    }
    \caption{Piecewise breakdown: {\name} vs. FlowWalker on updating and sampling time.}
    \label{fig:breakdown}
\end{figure}

\stitle{Piecewise breakdown.} Figure~\ref{fig:breakdown} studies the time consumption of 1 million insertions vs. 1 million deletions vs. 1 million sampling in {\name} and FlowWalker on various graph datasets. This study reveals the dominant factors affecting the overall performance. On average, {\name}'s insertion is $3.96 \times$ slower than deletion.  {\name}'s sampling is around $2$ orders of magnitude faster than deletion and insertion, thanks to $O(1)$ sampling time complexity.

Figure~\ref{fig:breakdown}(a) also compares the updating time of {\name} vs FlowWalker. FlowWalker achieves slightly faster ($2.35X$) updating performance than {\name} because it does not require maintaining auxiliary sampling structures. Instead, it simply reloads the new graph after updates. In contrast, {\name} must maintain additional sampling-related data structures, leading to marginally higher update times.

Figure~\ref{fig:breakdown}(b) further compares their sampling time. Both {\name} and FlowWalker achieve efficient sampling, but {\name} consistently outperforms FlowWalker across all datasets. Notably, in large-scale graphs like TW, {\name} maintains low sampling latency, while FlowWalker suffers a significant performance drop, with {\name} achieving a remarkable $218.7\times$ speedup. This gap stems from FlowWalker's reliance on reservoir sampling, which has an $\mathcal{O}(d)$ complexity. As the graph size and vertex degrees increase, this complexity becomes a bottleneck, significantly impacting sampling efficiency.

\section{Related Work}\label{RW}

\subsection{Random Walk and Sampling Systems}

\stitle{CPU-based systems.}
KnightKing \cite{yang2019knightking} pioneered the effort of building a dedicated graph engine for random walk applications. It introduces a walker-centric programming interface, divides the large biases into two smaller ones in pursuit of a lower rejection rate, and adopts 1-D graph partitioning to enable distributed random walk.
GraphWalker \cite{wang2020graphwalker}, on the other hand, builds a single machine-based random walk framework with a focus on I/O-efficient sub-graph loading strategy. ThunderRW \cite{sun2021thunderrw} notices that random walks severely under-utilize memory bandwidth. Thus, it breaks random walks into four steps and uses step interleaving to over-subscribe memory access requests. This design triggers aggressive software prefetching to improve memory bandwidth utilization. FlashMob \cite{yang2021random} simply re-arranges the memory operations of various sampling tasks to improve memory throughput during random walks.
Most recently, TEA \cite{huan2023tea} implements a hybrid sampling algorithm, which combines ITS and alias methods, for fast and memory-efficient sampling on a new type of graph, i.e., temporal graphs.

\stitle{Accelerator-based systems.}
C-SAW \cite{pandey2020c} leads the effort of building a general sampling framework on GPUs. This paper mainly optimizes the ITS method for first-order sampling. Inspired by KnightKing, NextDoor \cite{jangda2021accelerating} applies rejection sampling to GPUs and introduces transit parallelism for better load balancing and caching. Skywalker \cite{wang2021skywalker} implements parallel construction of alias tables on GPUs for improved load balancing, with Skywalker+~\cite{wangl2023optimizing} extending Skywalker to multiple GPUs. Recently, gSampler \cite{gong2023gsampler} introduced a set of expressive matrix-centric Application Programming Interface (APIs) to enhance the generality and efficiency of graph sampling for graph learning. We also noticed that LightRW~\cite{tan2023lightrw} has extended ThunderRW to FPGAs.

Whether on CPU, GPU, or FPGA platforms, none of these random walk systems explore random walk on dynamically changing graphs, which is the target of {\name}.

\vspace{-.1in}
\subsection{Dynamic Graph Processing Systems}
\label{sub:dynamic}
We identify a line of closely related interesting work which focuses on updating random walk results based on graph updates. These projects mainly aim to rapidly find the affected random walks that were computed previously for updating. Towards that end, Wharf \cite{papadias2022space} invents a compressed data structure that stores the random walk results with affordable memory and supports rapid results updates on dynamic graphs. FIRM \cite{hou2023personalized} presents an efficient indexing scheme that can trace and update the already calculated personalized PageRank results in constant time complexity when edge insertion and deletions are involved. Similar earlier efforts in this line are FORA~\cite{wang2019efficient} and SpeedPPR~\cite{wu2021unifying}.

Dynamic graph processing, which departs from traditional static graph processing systems\cite{zhu2016gemini,khorasani2014cusha,li2018regraph,zhu2019livegraph,maass2017mosaic,chen2019powerlyra,chen2017replication,venkataraman2013presto} or graph query systems\cite{wu2021fast,ding2019efficient,chen2023achieving}, is another related direction. First, on GPU platforms, STINGER \cite{ediger2012stinger} and cuSTINGER~\cite{green2016custinger} initialized the algorithm and data structure designs for frequent changes in dynamic graphs on GPUs. Hornet \cite{busato2018hornet} further improves cuSTINGER, while faimGraph \cite{winter2018faimgraph} designs a memory page management strategy for incremental updates on the GPU. Second, on the CPU platform, the streaming graph is a hot topic~\cite{jaiyeoba2019graphtinker,vora2017kickstarter,mariappan2019graphbolt,mariappan2021dzig}. Recent years have witnessed a surge of interest in maintaining ordered neighbors during graph updates for efficient streaming graph analytics~\cite{qi2024lsgraph}, examples are hash table \cite{awad2020dynamic}, Packed Memory Array~\cite{bender2007adaptive,sha2017technical}, Aspen~\cite{dhulipala2019low} and Pac-tree~\cite{dhulipala2022pac}. Terrace~\cite{pandey2021terrace} further adopts multiple data structures, including PMA and B-tree, for better performance. Most recently, LSGraph \cite{qi2024lsgraph} advocates support for both graph updates and graph computation analytics simultaneously with ordered neighbor updates.

{\name} differs from the aforementioned two lines of work as follows: (i) {\name} is orthogonal to identifying the already calculated random walks for the update. Particularly, once the calculated random walks are identified, instead of rebuilding the sampling space from scratch, {\name} can help them rapidly update the random walks. (ii) The second line of work (i.e., dynamic graph systems) provides a foundation for {\name}. In other words, these systems provide a platform but do not directly address instance generation for sampling or random walks. In particular, we adopt Hornet to support our dynamic data structures on GPUs.

\vspace{-.1in}
\subsection{Second-Order Random Walk}

Several projects have dubbed second-order random walks as dynamic graph random walks because these second-order algorithms (i.e., node2vec~\cite{grover2016node2vec}, Metapath~\cite{sun2013mining}, Second-order PageRank~\cite{wu2016remember}) require to change the transition probability of current vertex with respect to the history of a random walk. In short, these second-order algorithms introduce a dynamic bias for an unchanged static graph.

There mainly exist two lines of efforts in this direction, i.e., algorithm innovation and system designs:
(i) Regarding sampling algorithm design, KnightKing~\cite{yang2019knightking} introduces a novel two-step approach for second-order sampling: (1) using static sampling to select a vertex, and (2) using rejection sampling to involve the history of this random walk. On the contrary, FlowWalker~\cite{mei2024flowwalker} introduces two massively parallel sampling approaches based on Reservoir Sampling~\cite{vitter1985random} to quickly sample from the newly built sampling spaces.
(ii) For system design, GraSorw~\cite{li2022efficient} is a disk-based system for large graphs, using a triangular bi-block scheduling strategy to convert small random I/Os into large sequential I/Os. SOWalker~\cite{wu2023sowalker} optimizes I/O utilization by maximizing the benefit from block loading.

{\name} is orthogonal to these approaches because we work on algorithm and system designs of graphs with dynamic changing structures. We adopt Knighking's approach for handling second-order random walk applications, e.g., Node2vec.

\section{Conclusion}\label{C}
This paper takes the initiative to build a general random walk engine for dynamically changing graphs with two key principles: (i) this system should support both low-latency streaming updates and high-throughput batched updates. (ii) This system should achieve sampling speed and memory consumption comparable to the existing Monte Carlo sampling algorithms while supporting dynamic updates. Our system {\name} features three contributions:  
We first present a novel sampling algorithm that offers constant time sampling and fast updates. Furthermore, we introduce group adaptations for memory-efficient sampling space data structures. Finally, we introduce GPU-aware designs to support high-throughput batched graph updates. Our comprehensive evaluation demonstrates that {\name} outperforms the SOTA.

\begin{acks}
We thank the anonymous reviewers and shepherd Călin Iorgulescu for their helpful suggestions and feedback. 
This work was in part supported by the NSF CRII
Award No. 2331536, CAREER Award No. 2326141, and NSF Awards 2212370,
2319880, 2328948, 2411294, and 2417750.
\end{acks}

\bibliographystyle{ACM-Reference-Format}
\bibliography{references}

\clearpage

\section{Supplement Materials}

\subsection{Dynamic Graph Support and Partitioning}

\stitle{Dynamic data structure support.} {\name}'s design is compatible with most dynamic graph processing systems. To take advantage of incremental updates, {\name} needs dynamic data structures for our sampling system. {\name} maintains an intra-group neighbor index list and inverted index, in addition to the graph adjacency list. The good news is both the intra-group neighbor index list and the inverted index are similar to the adjacency list. 

{\name} adopts the dynamic array design from Hornet~\cite{busato2018hornet} to support our design.
Particularly, there mainly exist four different solutions for maintaining this structure on dynamic graph systems: block linked list~\cite{green2016custinger,winter2018faimgraph}, dynamic array~\cite{busato2018hornet}, hash table~\cite{awad2020dynamic}, and hybrid data structure~\cite{pandey2021terrace}. Querying information is a frequent and significant operation in both update and sampling of {\name}. Because the hash table design introduces random memory indices with extra hash calculation, block-linked and hybrid data structures have hierarchical structures that need extra global memory indirection, we choose dynamic array design from Hornet. Of note, we also maintain memory pools for dynamic arrays to reduce the cost of memory allocation.

\stitle{Graph partition.} Our system can be scaled to multiple GPUs to combat potentially huge workloads. Similar to Knightking~\cite{yang2019knightking}, we adopt 1-D graph partitioning to distribute the graph and associated sampling data structure across GPUs. 
Communication and data transfer are unavoidable in multi-GPU {\name}. Instead of transferring the sampling data structure, we choose to transfer walkers in the procedure of random walking, which has the following advantage: The cost of transferring the sampling data structure might be larger than recalculating it while transferring walkers has the light burden of communication. Further, we leverage the fast peer-to-peer GPU communication to reduce the cost.

\subsection{{\name} with Arbitrary Radix Bases}\label{sec:disc}

\begin{figure}[h]
    \includegraphics[width=1\linewidth]{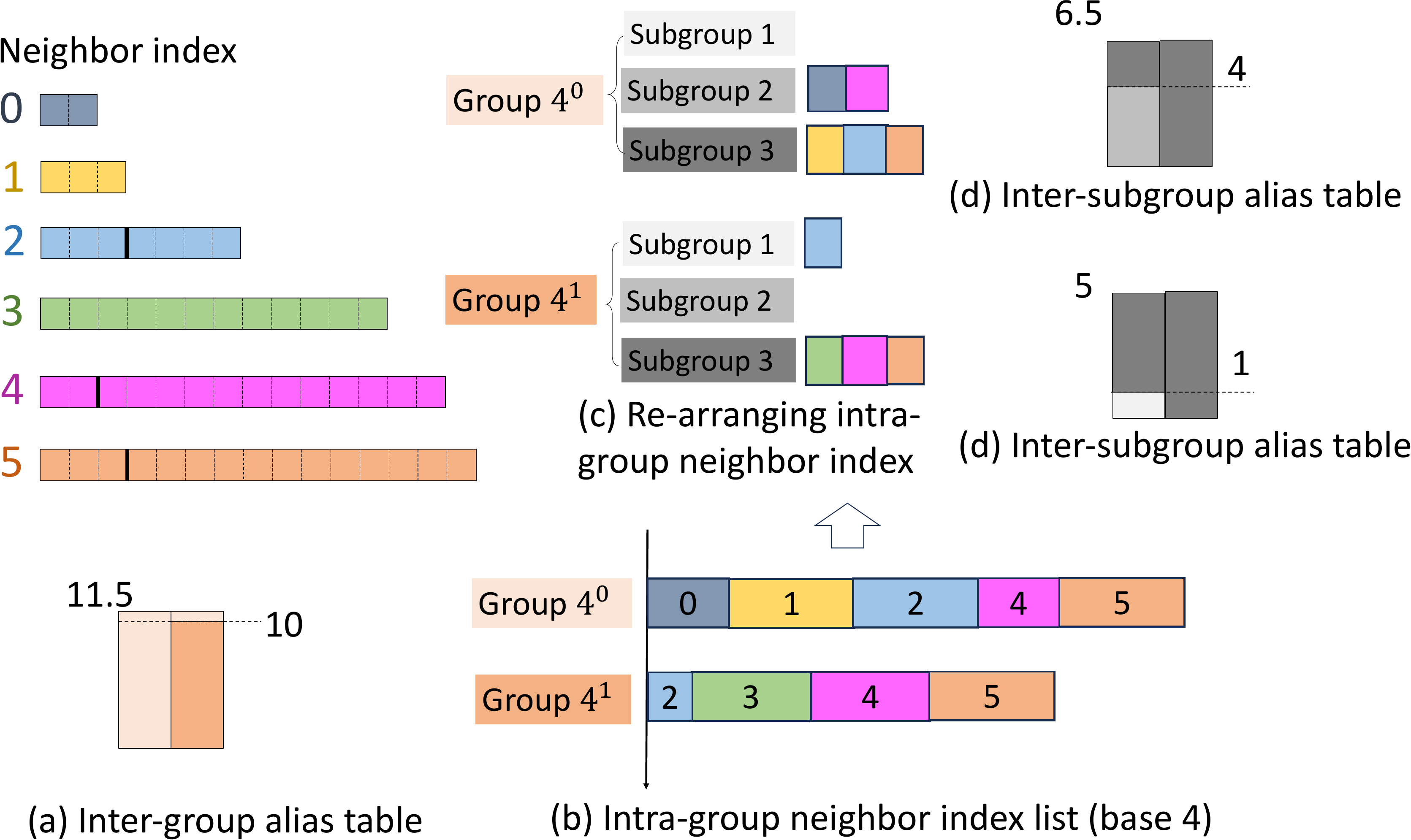}
    \caption{{\name} with radix base as 4.
    \vspace{-.2in}
    }
    \label{base4}
\end{figure}

\noindent

Figure~\ref{base4} exhibits {\name} with radix base as $4$. In this context, with another hierarchy or re-arrangement, we can still achieve near-constant time sampling. Following our original design, one will derive the (a) inter-group alias table and (b) intra-group neighbor index list. However, with base=4, the neighbors in each group of $4^i$ might not have the same bias. Using group $4^0$ as an example, neighbors 0 and 1, respectively, hold biases of 2 and 3 (their lengths indicate their bias values). 

Figure~\ref{base4}(c) and~\ref{base4}(d) offer a solution for this problem by further grouping various values in each group into subgroups. Using group $4^0$ of Figure~\ref{base4}(c) as an example, it has two neighbors in subgroup 2 and three in subgroup 3. Further, we can build an inter-subgroup alias table in Figure~\ref{base4}(d). During sampling, we can first identify the group. Subsequently, we can use the inter-subgroup alias table to identify the subgroup. Finally, one can still enjoy unbiased sampling in each subgroup to select the neighbor of interest. 

According to Table~\ref{timecmp}, larger radix bases could potentially help reduce the number of groups $K$ thus the complexity of insertion, deletion, and memory consumption. However, building a higher hierarchy of nested dynamic data structures could be challenging on GPUs. We thus did not evaluate {\name} on higher radix bases. This design might shed some light for CPU-based explorations.

\end{document}